\definecolor{NiColor}{RGB}{77,77,255}
\definecolor{NiColoRed}{RGB}{255,77,77}
\definecolor{NiCitation}{RGB}{0,181,26}
\newtheoremstyle{TheoremStyle}
{3pt}
{3pt}
{\slshape}
{}
{\sc}
{:}
{.5em}
{}
\theoremstyle{TheoremStyle}
\newtheorem{theorem}{Theorem}
\newtheorem{proposition}[theorem]{Proposition}
\newtheorem{lemma}[theorem]{Lemma}
\theoremstyle{definition}
\newtheorem{definition}[theorem]{Definition}
\newtheorem*{remark}{Remark}
\newtheorem{example}[theorem]{Example} 
\def\@endtheorem{\hfill$\lozenge$} 
\title{Large deviations in mean-field quantum spin systems}
\author[a]{\href{mailto:matthias.keller@uni-potsdam.de}{Matthias Keller}}
\author[b]{\href{mailto:christiaan.van-de-ven@uni-tuebingen.de}{Christiaan J. F. van de Ven}}
\affil[a]{University of Potsdam, Department of Mathematics, Campus Golm, Haus 9 Karl-Liebknecht-Stra\ss e 24-25, 14476 Potsdam, Germany}
\affil[b]{University of T\"{u}bingen, Department of Mathematics, Auf der Morgenstelle 10, 72076  T\"{u}bingen, Germany}
\def\bearray{\begin{eqnarray}}
\def\earray{\end{eqnarray}}
\def\beq{\begin{equation}}
\def\eeq{\end{equation}}
\def\b0{{\bf 0}}
\def\bC{{\mathbb C}}           
\def\gA{{\mathfrak A}}       
\newcommand{\Hmm}[1]{\leavevmode{\marginpar{\tiny%
$\hbox to 0mm{\hspace*{-0.5mm}$\leftarrow$\hss}%
\vcenter{\vrule depth 0.1mm height 0.1mm width \the\marginparwidth}%
\hbox to 0mm{\hss$\rightarrow$\hspace*{-0.5mm}}$\\\relax\raggedright #1}}}
\begin{document}


\maketitle

\begin{abstract}
\noindent
Continuous fields (or bundles) of $C^*$-algebras form an important ingredient for describing emergent phenomena, such as phase transitions and spontaneous symmetry breaking. 
In this work, we consider the continuous $C^*$-bundle generated by increasing symmetric tensor powers of the complex $\ell\times\ell$ matrices $M_\ell(\mathbb{C})$, which can be interpreted as abstract description of mean-field theories defining the macroscopic limit of infinite quantum systems. Within this framework  we discuss the principle of large deviations for the local Gibbs state in the high temperature regime and characterize the limit of the ensuing logarithmic generating function. 
\end{abstract}

\tableofcontents

\section{Introduction}

\subsection{Continuous $C^*$-bundles and $C^*$-algebraic deformation quantization}\label{subs:intro}
In this paper two important and seemingly unrelated topics in mathematical physics are addressed and brought forward by relating them to each other:
\begin{itemize}
\item[1.] A continuous bundle of $C^*$-algebras, with additional structure of deformation quantization;
\item[2.] The theory of large deviations.
\end{itemize}
The relationship is that the first topic may be used to describe a class of ``physical'' Hamiltonians, for which one is allowed to study large deviations of suitable observables by using the continuity properties of the $C^*$-bundle, yielding a classical (commutative) theory in the pertinent limit. More precisely, for the purpose of this work, we adapt this setting to mean-field quantum spin Hamiltonians.
\\\\
Below is the technical definition of continuous bundle of $C^*$-algebras, \textit{cf.} \cite[\S IV.1.6]{Blackadar_2006}. We denote by $\overline{\mathbb{N}}=\mathbb{N}\cup\{\infty\}$, and by $C(\overline{\mathbb{N}})$ the space of $\mathbb{C}$-valued sequences $(\alpha_N)_{N\in\mathbb{N}}$ such that $\alpha_\infty:=\lim\limits_{N\to\infty}\alpha_N\in\mathbb{C}$ exists.
A \textbf{continuous bundle (or field) of $C^*$-algebras} over $\overline{\mathbb{N}}$ is a triple $\mathcal{A}$, $(\mathcal{A}_N)_{N\in\overline{\mathbb{N}}}$, $(\chi_N)_{N\in\overline{\mathbb{N}}}$ made by $C^*$-algebras $\mathcal{A},\mathcal{A}_N$, $N\in\overline{\mathbb{N}}$, and surjective homomorphisms $\chi_N\colon\mathcal{A}\to\mathcal{A}_N$ such that:
\begin{enumerate}[(i)]
	\item
	\label{Item: norm condition for bundle of Cstar algebras}
	The norm of $\mathcal{A}$ is given by $\|a\|_{\mathcal{A}}=\sup\limits_{N\in\overline{\mathbb{N}}}\|\chi_N(a)\|_{\mathcal{A}_N}$;
	
	\item
	\label{Item: product by function condition for bundle of Cstar algebras}
	For all $\alpha=(\alpha_N)_{N\in\overline{\mathbb{N}}}\in C(\overline{\mathbb{N}})$ and $a\in\mathcal{A}$ there exists  $\alpha a\in\mathcal{A}$ with the property that $\chi_N(\alpha a)=\alpha_N\chi_N(a)$;
	
	\item
	\label{Item: norm continuity for bundle of Cstar algebras}
	For all $a\in\mathcal{A}$, $(\|\chi_N(a)\|_{\mathcal{A}_N})_{N\in\overline{\mathbb{N}}}\in C(\overline{\mathbb{N}})$.
\end{enumerate}
A \textbf{continuous section} of $\mathcal{A}$ is an element $a\in\prod_{N\in\overline{\mathbb{N}}}\mathcal{A}_N$ such that there exists $a'\in\mathcal{A}$ fulfilling $a_N=\chi_N(a')$ for all $N\in\overline{\mathbb{N}}$. In this way $\mathcal{A}$ can be identified with its continuous sections. Indeed, if we define the map
\begin{align*}
\mathcal{A}\to \prod_{N\in\overline{\mathbb{N}}}\mathcal{A}_N; 
 \ \ \ \ \ a\mapsto (\chi_N(a))_{N\in\overline{\mathbb{N}}},
\end{align*}
it follows by surjectivity of $\chi_N$ and (i) that this map is a bijection. 
For this reason, we will implicitly identify $\chi_N$, $N\in\overline{\mathbb{N}}$, with the projection $\prod_{N\in\overline{\mathbb{N}}}\mathcal{A}_N\to\mathcal{A}_N$.
\\\\
The above construction may lead to a \textbf{strict deformation quantization}, defined by the following additional data, see e.g.  \cite{Landsman_2017,DV2}:
\begin{enumerate}
	\item
	A continuous bundle of $C^*$-algebras, as described above for which the fiber at infinity assumes the form $\mathcal{A}_\infty=C(X)$, where $X$ is a Poisson manifold;

	\item  
	A family of linear maps, called \textbf{quantization maps}, $Q_N\colon\widetilde{\mathcal{A}}_\infty\to\mathcal{A}_N$, $N\in\overline{\mathbb{N}}$, with $\widetilde{\mathcal{A}}_\infty:=C^\infty(X)$, such that
	\begin{enumerate}
		\item\label{Item: quantization maps are Hermitian and define a continuous section}
		$Q_\infty=\operatorname{Id}_{\widetilde{\mathcal{A}}_\infty}$ and $Q_N(a_\infty)^*=Q_N(a_\infty^*)$ for all $a_\infty\in\widetilde{\mathcal{A}}_\infty$ and all $N\in\overline{\mathbb{N}}$.
		Moreover, the assignment
		\begin{align}\label{ctssection}
			\overline{\mathbb{N}}\ni N\mapsto Q_N(a_\infty)\in\mathcal{A}_N\,,
		\end{align}
		defines a continuous section of the bundle.
		\item\label{Item: quantization maps fulfils the DGR condition}
		For all $a_\infty,a_\infty'\in\widetilde{\mathcal{A}}_\infty$ one has the {\bf Dirac-Groenewold-Rieffel condition}
		\begin{align}\label{Eq: Dirac-Groenewold-Rieffel condition}
			\lim_{N\to\infty}\|Q_N(\{a_\infty,a_\infty'\})-iN[Q_N(a_\infty),Q_N(a_\infty')]\|_{\mathcal{A}_N}=0\,.
		\end{align}
		
		\item\label{Item: quantization maps are strict}
		For all $N\in\mathbb{N}$, $Q_N(\widetilde{\mathcal{A}}_\infty)$ is a dense $*$-subalgebra of $\mathcal{A}_N$.
	\end{enumerate}
\end{enumerate}
The algebra $\mathcal{A}_\infty$ at $N=\infty$ represents the classical (or macroscopic) observables of the physical system, whilst the fibers $\mathcal{A}_N$, $N\in\mathbb{N}$, recollect the quantum observables of the increasing quantum system. The continuous sections defined by the quantization maps, \textit{cf.} \eqref{Item: quantization maps are Hermitian and define a continuous section} of the definition above, form the main ingredients of our analysis. In this paper the continuous $C^*$-bundle of our interest is equipped with the additional structure of a strict deformation quantization, \textit{cf.} $\S\ref{MFQss and all that}$.

We would like to point out to the reader that this abstract framework is perfectly suited to characterize the notion of the {\em classical limit} of quantum systems, including a rigorous derivation of spontaneous symmetry breaking and phase transitions \cite{MorVen1,MorVen2,LMV,Ven20222,Ven2022}. 

\subsection{Principle of large deviations}
Large deviations play an important role in the asymptotic evaluation of certain integrals often appearing in probability theory and classical statistical mechanics, viz.
\begin{align}\label{integral0}
\int d\mu_N(x)e^{\nu_N f(x)},\ \  (\nu_N\to\infty, \ \text{as} \ N\to\infty)
\end{align}
for which the measures $\mu_N$ satisfy a law of large numbers, $\nu_N$ are real positive numbers and $f$ is some function.  Traditionally, the Laplace method, and more generally, the method of the steepest descent, have turned out to be important approximation methods for such integrals, in particular in the theory of soliton equations, integrable models and random matrices. It is however a challenging and difficult task to make these techniques mathematically precise. Instead, the principle of large deviations, due to Varadhan \cite{Varadhan,Varadhan0}, provides an efficient way of evaluating  integrals of the above type. The abstract setting is that the measures $\mu_N$ are probabilities measures  on the Borel subsets of a Polish space $\Omega$, i.e., a separable completely metrizable topological space.
Then, in certain situations, one may expect that
$$d\mu_N(x)\ \sim \  dx e^{-\nu_N I(x)},$$ 
in a suitable sense, where $I$ is the so-called {\em rate function} for the sequence of measures $(\mu_N)_N$. In such cases, the integral \eqref{integral0} may be computed as
\begin{align*}
\frac{1}{\nu_N}\log{\int d\mu_N(x) e^{\nu_N f(x)}}\ \to \ \sup{\{f(x)-I(x) \ |\ x\in \Omega\}}, \ \  (N\to\infty).
\end{align*}
To make all this precise the following well-known definition is introduced.
\begin{definition}[LDP]\label{LDP I}
A sequence of probability measures $(\mu_N)_{N\in\mathbb{N}}$ on a Polish space
$\Omega$ equipped with the Borel $\Sigma$-algebra  satisfies a {\bf large deviation principle} (LDP) with a \textbf{rate function} $I : \Omega \to [0, \infty]$ and speed (or rate) $\nu_N\geq 0$, if
\begin{itemize}
    \item[(i)] $I$ has closed sub-level sets $\{x\in \Omega \ | \ I(x)\leq k\}$ for all $k\in [0,\infty)$.
    \item[(ii)] For all compact $C\subset \Omega$,
\begin{align*}
\limsup_{N\to\infty}\frac{1}{\nu_N}\log{\mu_N(C)}\leq -\inf_{x\in C}I(x).
\end{align*}
\item[(iii)] For all open $O\subset \Omega$,
\begin{align*}
\liminf_{N\to\infty}\frac{1}{\nu_N}\log{\mu_N(O)}\geq -\inf_{x\in O}I(x).
\end{align*}
\end{itemize}  Furthermore, $I$ is said to be a \textbf{good} rate function if all the level sets are compact subsets of $\Omega$.
\hfill$\blacksquare$
\end{definition}
The theory of large deviations can be applied in  several different fields, in particular in semi-classical analysis of Schr\"{o}dinger operators in the regime $\hbar\to 0$, where $\hbar$ appears in front of the Laplacian. In this way, up to a suitable scale separation, $\nu_N$ should be interpreted as $1/\hbar$. This has proved to be a rigorous approach for the study of quantum tunneling and eigenfuntion localization \cite{Sim851,Sim852}. The rate function in such  cases is a quantitative measure of exponential decay away from the minima of the pertinent potential, as function of $\hbar$.
\\\\
In order to study the large deviations for $\mu_N$ as $N\to\infty$, one typically considers the corresponding logarithmic moment generating
function, defined as
\begin{align*}
F(t):=\lim_{N\to\infty}\frac{1}{\nu_N}\log{\int_\Omega d\mu_N(x)e^{Ntx}}.
\end{align*} 
The G\"{a}rtner–Ellis Theorem (see e.g. \cite{EG}), shows that the existence of
$F(t)$ implies large deviation upper bounds with good rate function $I(x)$,  that is, the Legendre transform of $F(t)$. One obtains lower bounds if, in addition, the moment generating function is differentiable. If the moment generating function is not differentiable, one has a weaker result: in Definition~\ref{LDP I}~(iii), the
infimum over $O$ is replaced by the infimum over $O\cap E$, where $E$ is the set  of the so-called ``exposed points'' \cite{Dembo}.
\subsection{Large deviations in statistical mechanics}\label{ldp in statmech}
Beside its great value in classical statistical mechanics, the theory of large deviations is also suitable for probability measures occurring in (equilibrium) quantum statistical mechanics. In this setting, one typically seeks for a large deviation principle concerning the celebrated Kubo–Martin–Schwinger equilibrium states (KMS states), which are, loosely speaking, limit points of the local Gibbs state induced by local interacting Hamiltonians. For each such  KMS state $\omega$  a particular study of interest is to analyze the asymptotic behaviour of the distribution of  quantum averages with respect to $\omega$, in the regime of increasing number of lattice sites or increasing volume \cite{LR,NR,O}. 

In this work we focus on mean-field quantum spin systems, arising as abstract elements of a continuous bundle of $C^*$-algebras \cite{Ven2020,Ven2022}. Due to the long range interacting nature of such models, no KMS condition can be formulated for the limiting system. Nonetheless, on account of the renown Quantum De Finetti Theorem, one may rely on a suitable decomposition of the limit points of local Gibbs states to investigate large deviations. The minimizers of the pertinent mean-field free energy, whose existence is always guaranteed \cite{CLR,RMNB,Ven20222}, are uniquely identified with the limit points of the local Gibbs state \cite{GRV}. Each minimizer represents a pure thermodynamic phase of the given infinite quantum system, whose thermodynamics is described by the fixed phase space $S(M_\ell(\mathbb{C}))$, interpreted as the set of density matrices of size $\ell$. The ensuing mean-field limit is therefore considered as ``macroscopic'': the number of particles is appropriately  sent to infinity  yielding a commutative limiting theory as a limit of infinite statistical mechanics.
\\\\
\noindent
This paper is structured as follows.  The principle of large deviations is first investigated in the $C^*$-algebraic framework of deformation quantization and continuous $C^*$-bundles, namely the one describing quantum spin systems with mean-field interactions on a lattice of increasing size, also dubbed {\em symmetric $C^*$-bundle}.  More precisely, in Theorem~\ref{prop: moment generating function 21} we prove large deviation upper bounds, and discuss the lower bounds, for the logarithmic moment generating function, defined in what follows. For $(h_N)_N$, an arbitrary self-adjoint continuous cross-sections of the symmetric $C^*$-bundle of matrix algebras (the only case we care about), we first consider the {\bf local Gibbs state}
$$\omega_N(\cdot):=\frac{Tr[e^{Nh_N}(\cdot)]}{Tr[e^{Nh_N}]},$$
see \cite{BR12} for further context and details.
For $(a_N)_N$ another arbitrary self-adjoint continuous cross-sections of this symmetric $C^*$-bundle, we then define $F_N:\mathbb{R}\to \mathbb{R}$
\begin{align}\label{momgen1}
F_N(t):=\frac{1}{N}\log{\frac{Tr[e^{Nh_N}e^{Nta_N}]}{Tr[e^{Nh_N}]}}=\frac{1}{N}\log{\omega_N(e^{Nta_N})}.
\end{align}
Note that $(h_N)_N$ and $(a_N)_N$ are generalizations of the empirical averages, the latter often used in this context \cite{LR,NR,O}.  Here, $N$ corresponds to the number of lattice sites  of the  underlying quantum system; and therefore plays the role of the diverging sequence $(\nu_N)_N$ in the above discussion. The section $(h_N)_N$ defines the normalized version of a mean-field quantum spin Hamiltonian, \textit{cf.} \eqref{meanfield},\eqref{asymptotic equivalence 2} below.
%
%

More precisely,  if  $(a_N)_N$  is such a cross-section, it is known that the spectrum $\sigma(a_N)$ of $a_N$ is contained in $\text{ran}(a_\infty)$, where $a_\infty$ denotes the principal symbol of the section  $(a_N)_N$, that is, a real-valued continuous function seen as limit of the sequence $(a_N)_N$ \cite{Ven20222}. One may then consider the spectral projection $1_\Delta(a_N)$ defined for Borel sets $\Delta\subset\text{ran}(a_\infty)$, and define the {\em distribution of $a$} with respect to the local Gibbs state $\omega_N$
\begin{align}\label{mu}
\mu_N(\Delta):=\omega_N(1_\Delta(a_N)).
\end{align}



Consequently, we deduce large deviation upper bounds, and discuss the lower bounds, for the sequence of probability measures $(\mu_N)_{N\in\mathbb N}$. We hereto demonstrate  that the generating function $F_N$, which, by functional calculus also reads
\begin{align*}\label{generating function}
F_N(t)=\frac{1}{N}\log{\int e^{tNx}d\mu_N(x)}
\end{align*}
has a limit as $N\to\infty$.

\begin{remark}
Our result actually relates to the celebrated Born rule: if an observable $a$ is measured in a state $\omega$, then the probability $\mathbb{P}_\omega (a \in \Delta)$ that the
outcome lies in some measurable subset $\Delta\subset\sigma(a)\subset\mathbb{R}$ is given by
\begin{align*}
    \mathbb{P}_\omega (a \in \Delta)=\mu_\omega(\Delta).\tag*{$\blacksquare$}
\end{align*}
\end{remark}
\begin{remark}
One should be aware of the fact that Golden-Thompson inequality, stating that 
\begin{align}\label{GT}
    \frac{1}{N}\log{Tr[e^{Nh_N+Nta_N}]}\leq \frac{1}{N}\log{Tr[e^{Nh_N}e^{Nta_N}]}, \ \ (N\in\mathbb{N}),
\end{align}
remains in general a strict inequality in the limit, even in the case that $\lim_{N\to\infty}\|[h_N,a_N]\|_N= 0$, cf. \cite{RW}.
    \hfill$\blacksquare$
\end{remark}

Finally, we would like to point out to the reader that the possible existence of the limit of  $F_N$ defined by \eqref{momgen1} for mean-field models $(h_N)_N$ and $(a_N)_N$ has been conjectured in \cite{RW}, no proof was however given in subsequent literature. 



\section{Mean-field quantum spin systems in algebraic quantum theory}\label{LDP mean field theories}
\subsection{Mean-field theories}\label{mfts}
Mean-field theories (MFTs) play an important role as approximate models of the more complex nearest neighbor interacting spin systems. Their relatively simple structure allows for a detailed analysis of the limit of increasing number of lattice sites especially in view of spontaneous symmetry breaking (SSB) and phase transitions \cite{Ven20222}. 

Homogeneous mean-field quantum spin systems fall into the class of MFTs. They are defined by a single-site Hilbert space $\mathbb{C}^\ell$ and local Hamiltonians of the type
\begin{align}
H_\Lambda=|\Lambda|h_\infty(t_0^{(\Lambda)},t_1^{(\Lambda)},\cdot\cdot\cdot, t_{\ell^2-1}^{(\Lambda)}),\label{meanfield}
\end{align}
where $h_\infty$ is the {\em principal symbol}, i.e., a polynomial in $\ell^2$ (non-commutative) variables, and $\Lambda\subset\mathbb{Z}^d$ denotes a finite lattice on which $H_{\Lambda}$ is defined and $|\Lambda|$ denotes the number of lattice points (see e.g. \cite[$\S$10.8]{Landsman_2017}). Here $t_0= 1_{\ell}$ is the identity matrix in  the  algebra of complex $\ell\times\ell$ matrices $M_\ell(\mathbb{C})$, and the matrices $(t_\mu)_{\mu=1}^{\ell^2-1}$ in $M_\ell(\mathbb{C})$ form a basis of the real vector space of traceless self-adjoint $\ell\times \ell$ matrices; the latter may be identified with $i$ times the Lie algebra $\mathfrak{su}(\ell)$ of $SU(\ell)$, which identification is useful in defining the Poisson bracket below, {cf.} \eqref{Poisson} below. 
The macroscopic average spin operators are now defined by
\begin{align}\label{basisliealgebra}
t_\mu^{(\Lambda)}=\frac{1}{|\Lambda|}\sum_{x\in\Lambda}t_\mu(x), \ \ (\mu=1,\cdots ,\ell^2-1).
\end{align}
Here $t_\mu(x)$ stands for $1_{\ell} \otimes \cdots \otimes t_\mu\otimes \cdots \otimes 1_{\ell}$, where $t_\mu$ occupies  slot $x$.

It follows directly from the definition that such models are characterized by the property that all spins interact with each other which implies that these models are permutation-invariant and that the geometric configuration including the dimension is irrelevant. In what follows we therefore consider homogeneous mean-field quantum spin chains, i.e. $d=1$. We will leave out the term ``homogeneous'' in the forthcoming  discussion.

\subsection{Strict deformation quantization of Poisson manifolds}\label{MFQss and all that}
In contrast to the {\em thermodynamic limit}, that is, a rigorous formalism in which the number of lattice sites $N$ as well as the volume of the system at constant density, are sent to infinity, the limit we consider here is {\em macroscopic}.
This means that the limit $N\to\infty$ is formalized by a classical (commutative) theory, introduced in what follows.
\\

\noindent For any unital $C^*$-algebra $\mathfrak{A}$ and $X\subset\mathbb{Z}$ with $N=|X|$, we set
\begin{align}\label{Eq: quasi-local bundle}
    \mathcal{A}_{N}^\pi:=\begin{dcases}
        \gA_X^\pi\equiv \pi_N(\mathfrak{A}^{N})\quad  N\in\mathbb{N},
        \\
        [\mathfrak{A}]_\infty^\pi\quad \hspace*{1.65cm}N=\infty,
    \end{dcases}
\end{align}
where $\pi_N$ is the \textbf{symmetrization operator} defined by continuous and  linear extension on elementary tensors $a_N=\alpha_1\otimes\cdots\otimes \alpha_N$
\begin{align}\label{def:SN}
\pi_N(a_N):=
    \frac{1}{N!}\sum_{\sigma\in S_N}\alpha_{\sigma(1)}\otimes\cdots\otimes \alpha_{\sigma(N)}\,
\end{align}
where $\alpha_i\in\mathfrak{A}$ for all $i=1\dots N$,
and the summation is over the elements $\sigma$ in the permutation group of order $N!$, denoted by $S_N$. Moreover, the algebra $[\mathfrak{A}]_\pi^\infty$ is defined in an inductive way, yielding a so-called generalized inductive system \cite[$\S$V.4.3]{Blackadar_2006}. This algebra is actually the completion of equivalence classes of symmetric sequences (introduced below) under the equivalence relation
\begin{align*}
a_N \sim b_N \  \ \ \text{iff}  \  \ \  \lim_{N\to\infty}\|a_N-b_N\|_N=0,
\end{align*}
and norm
\begin{align*}
\| [a_N]_N\|=\lim_{N\to\infty}\|a_N\|_N.
\end{align*}
This construction leads to a continuous field of $C^*$-algebras whose continuous sections are given by    quasi-symmetric sequences \cite[Theorem 8.4]{Landsman_2017}. To construct these we need to generalize the definition of $\pi_N$. For $\Lambda\subset X$  with $M=|\Lambda|$, define a bounded  operator  $\pi_N^M: {\mathfrak{A}^\Lambda}	\to {\mathfrak{A}}^X$, defined by linear and continuous extension of  
\begin{align}\pi_N^M(a) = \pi_N(a\otimes \underbrace{1 \otimes \cdots \otimes 1 }_{N-M \mbox{\scriptsize \ times}}),\quad a \in {\mathfrak{A}}^\Lambda. \label{defSMN}
\end{align}
Clearly, $\pi_N^N=\pi_N$. We say that a sequence
 $(a_{N})_{N\in\mathbb{N}}$ is called (strictly) {\bf symmetric}
if there exist $\Lambda \subset X$ (with $M=|\Lambda|$) and $a_\Lambda\in {\mathfrak{A}}^{\Lambda}$ such that 
\begin{align*}
    a_N= \pi_N^M(a_\Lambda)\ \ \:\mbox{for all }  N\geq M,
\end{align*}
and {\bf quasi-symmetric} if  $a_N = \pi_N(a_N)$ if $N\in \mathbb{N}$,
and for every $\varepsilon > 0$, there is a symmetric sequence $(a_N')_{N\in\mathbb{N}}$
as well as  $\Lambda \subset X$ (both depending on $\varepsilon$) such that 
\begin{align*}
 \|a_{N}-a_{N}'\| < \varepsilon\: \mbox{ for all } N > M.
\end{align*} 
\begin{example}\label{Example}
If $\lim_{N\to\infty}\|a_N-a_N'\|_N= 0$ for some fixed symmetric sequence $(a_N')$, then $(a_N)$ is obviously quasi-symmetric.
\hfill$\blacksquare$
\end{example}
\begin{remark}
From a classical point of view, quasi-symmetric sequences are also called ``tail events'', i.e., events  whose occurrence is not altered by local changes.
\hfill $\blacksquare$
\end{remark}
We write $S({\mathfrak{A}})$ for the algebraic state space, consisting of all positive linear normalized functionals $\omega:\mathfrak{A}\to\mathbb{C}$, and indicate by $C(S(\mathfrak{A}))$ the $C^*$-algebra of continuous functions on $S(\mathfrak{A})$.  It is common knowledge and not hard to see (see \cite[Chapter 8]{Landsman_2017}, \cite{LMV} for details), that for any quasi-symmetric sequence the following limit exists
\begin{align}
a_\infty(\omega)=\lim_{N\to\infty}\omega^N(a_{N}),\label{8.46}
\end{align}
where $\omega\in S({\mathfrak{A}})$, and the product state $$\omega^N=  \underbrace{\omega\otimes \cdots \otimes \omega}_{N\: \mbox{\scriptsize times}} \in S(\pi_N({\mathfrak{A}}^{N}))$$
is  the unique (norm) continuous linear extension of the following map that is defined on elementary tensors
\begin{align*}
\omega^N(\alpha_1\otimes\cdot\cdot\cdot\otimes \alpha_N)=\omega(\alpha_1)\cdot\cdot\cdot\omega(\alpha_N).
\end{align*}
Furthermore, the limit in \eqref{8.46} defines a function in $C(S(\mathfrak{A}))$ provided that $(a_{N})_{N\in\mathbb{N}}$ is quasi-symmetric, otherwise it may not exist. In particular, the continuous cross-sections are  the sequences of the form $(a_N)_{N\in\overline{\mathbb{N}}}=(a_N,a_\infty)_{N\in \mathbb{N}}$, where $a_\infty$ is defined by \eqref{8.46}. These sections are also called {\em macroscopic}, due to their commutative character in the limit $N\to\infty$. 

In fact, an application of the Quantum De Finetti Theorem entails that $[\mathfrak{A}]_\infty^\pi$ is commutative and isometrically isomorphic to $C(S(\mathfrak{A}))$, \cite[Chapter~8]{Landsman_2017}, \cite{RW}. This isomorphism is precisely implemented by the following map, \textit{cf.} \eqref{8.46}
\begin{align}\label{map j}
j:[\mathfrak{A}]_\infty^\pi\to C(S(\mathfrak{A})),\qquad j([a_N]_N)(\omega):=\lim_{N\to\infty}\omega^N(a_N)=a_\infty(\omega). 
\end{align}
While the above construction works for general $\mathfrak{A}$, more can be said in the specific case that $\mathfrak{A}=M_\ell(\mathbb{C})$. Indeed,  the above construction then relates to a strict deformation quantization of the state space $S(M_\ell(\mathbb{C}))$ \cite[Theorem~3.4]{LMV}. Specifically, there exists an affine parametrization of $S(M_\ell(\mathbb{C}))$ representing $S(M_\ell(\mathbb{C}))$ as a compact convex subset $\mathcal{Q}_\ell\subset\mathbb{R}^{\ell^2-1}$ 
with non-empty interior  which coordinates are explicitly given by
\begin{align}\label{coordinates} S(M_\ell(\mathbb{C}))\to \mathcal{Q}_\ell,\quad \omega\mapsto (x_\mu(\omega))_\mu^{\ell^2-1}= (\omega_{t_\mu})_\mu^{\ell^2-1},
\end{align}
where the  matrices $(t_\mu)_{\mu=1}^{\ell^2-1}$ in $M_\ell(\mathbb{C})$ are a normalized basis of the real vector space of traceless self-adjoint $\ell\times \ell$ matrices, see  \cite[Eqn. (2.11)]{LMV}.

Moreover, the interior $\text{int}(\mathcal{Q}_\ell)$ of $\mathcal{Q}_\ell$ corresponds to the
rank-$\ell$ density matrices and is a connected $\ell^2-1$ dimensional smooth manifold \cite[$\S$2.2]{LMV}. In this realization, $S(M_\ell(\mathbb{C}))$ is shown to admit a canonical Poisson structure, given for $f,g\in C^\infty(\mathcal{Q}_\ell)$ by
\begin{align}\label{Poisson}
\{f,g\}(x)=\sum_{a,b,c=1}^{\ell^2-1}C_{ab}^cx_c\frac{\partial \tilde f(x)}{\partial x_a}\frac{\partial \tilde g(x)}{\partial x_b}, \ \ x\in C^\infty(\mathcal{Q}_\ell),
\end{align}
where $\tilde f,\tilde g\in C^\infty(\mathbb{R}^{\ell^2-1})$ are arbitrary extensions of $f,g$ and $C_{ab}^c$ are the structure constants corresponding to the lie algebra of $SU(\ell)$. The precise technical details of this construction can be found in \cite[$\S$2.2, $\S$2.3]{LMV}.
\\\\
The quantization maps $Q_{N}$ are defined on a dense Poisson algebra $\Tilde{\mathcal{A}}_\infty^\pi\subset\mathcal{A}_\infty^\pi\simeq C(S(M_\ell(\mathbb{C})))$ as follows. First, the relevant Poisson subalgebra $\Tilde{\mathcal{A}}_\infty^\pi$ is made of the restrictions to $S(M_\ell(\mathbb{C}))$ of polynomials in $\ell^2-1$ coordinates of $\mathbb{R}^{\ell^2-1}$. Each elementary  tensor of the form $$t_{j_1}\otimes\cdot\cdot\cdot\otimes t_{j_L},$$
where $it_1,\ldots, it_{\ell^2-1}$ are traceless self-adjoint $\ell\times\ell$ matrices forming a basis of the Lie algebra of $SU(\ell)$, may be uniquely identified with a monomial $p_L$ of degree $L$ \cite[Lemma 3.2]{LMV}. This allows one to define the quantization map $Q_{N}$ as follows. If 
\begin{align*}
    p_L(x_1,\ldots, x_{\ell^2-1}) = x_{j_1} \cdots x_{j_L}\quad \mbox{where $j_1,\ldots, j_L \in   \{1,2,\ldots, \ell^2-1\}$,}
\end{align*}
the quantization maps $Q_{N}: \Tilde{\mathcal{A}}_\infty^\pi \to \mathcal{A}_X^\pi$ act as 
\begin{align}\label{deformationquantization1}
 Q_{N}(p_L) &=
\begin{cases}
    \pi_N^L(t_{j_1}\otimes\cdot\cdot\cdot\otimes t_{j_L}), &\ \text{if} \ N\geq L \\
    0, & \ \text{if} \ N< L,
\end{cases}\\
Q_{N}(1) &= \underbrace{1_\ell \otimes \cdots \otimes 1_\ell}_{\scriptsize N \: \mbox{times}}, \label{deformationquantization2}
\end{align}
where $\pi_N^L$ is given by \eqref{defSMN}. We extend $Q_N$ to all polynomials by taking linear combinations of the elementary tensors $t_{j_1}\otimes\cdot\cdot\cdot\otimes t_{j_L}$.
The  maps $Q_{N}$ indeed satisfy the required conditions \eqref{Item: quantization maps are Hermitian and define a continuous section},\eqref{Item: quantization maps fulfils the DGR condition},\eqref{Item: quantization maps are strict} of the pertinent definition displayed in $\S$\ref{subs:intro}, cf. \cite{LMV}, whose properties will be used in $\S$\ref{App:proofs}.
\begin{remark}\label{correspondence rmk}
It is precisely this interpretation of the limit $N\to\infty$ that relates mean field quantum spin systems to strict deformation quantization, since the  ensuing mean-field Hamiltonians correspond to quasi-symmetric sequences which in turn are defined by the quantization maps \eqref{deformationquantization1}--\eqref{deformationquantization2}. A straightforward combinatorial exercise shows that \eqref{meanfield} implies
 \begin{align}\label{asymptotic equivalence 2}
\|H_N/N-Q_N(h_\infty)\|_N=O(1/N),
\end{align}
where $h_\infty$  is a polynomial (principal symbol) defining the mean-field model. 
We point out that the scaling factor $N$ is essential: it says that the mean-field Hamiltonian is of order $N$, which resembles the physical idea that the spectral radius scales with the dimension of the underlying space.
\hfill$\blacksquare$
\end{remark}

\section{Logarithmic moment generating function and local Gibbs states}\label{magicmoments}
From now on $\gA=M_\ell(\mathbb{C})$ and we consider the continuous field of $C^*$-algebras with continuous cross-sections given by quasi-symmetric sequences.  We first discuss the principle of large deviations for local Gibbs states, in that, we focus on the quantity $F_N$ defined by \eqref{momgen1}.
Our main result is Theorem~\ref{prop: moment generating function 21}, proving sufficient conditions for the limit $F(t):=\lim_{N\to\infty}F_N(t)$ to exist for all small $t$.  \\

\noindent To this avail, we proceed as follows. We rewrite
\begin{align*}
    F_N(t)=\frac{1}{N}\log{Tr[e^{N\bar Z_{N,t}}]}-\frac{1}{N}\log{Tr[e^{N h_N}]}
\end{align*}
where
\begin{align}\label{section}
    \bar Z_{N,t}=\frac{1}{N}\log{e^{Nta_N}e^{Nh_N}},
\end{align}
which is well-defined since $e^{Na_N}e^{Nb_N}$ is invertible. We will prove that $[\bar Z_{N,t}]_N\in[\mathfrak{A}]_\infty^\pi$. Subsequently, we show that $\lim_{N\to\infty}F_N(t)$ exists and characterize the limit.

\subsection{A Baker-Campbell-Hausdorff approach}\label{App:proofs}

For any matrices $a,b\in M_\ell(\mathbb{C})$ and $s\in\mathbb{R}$, we consider the following formal expression for the Baker-Campbell-Hausdorff (BCH) formula
\begin{align} \label{eq 28}
e^{sa}e^{sb}=e^{Z(s)}, \ \  Z(s):=\log{(e^{sa}e^{sb})}=\sum_{n=1}^\infty s^nz(n),
\end{align}
where the Taylor coefficients are given as usual by
\begin{align*}
    z(n)=\frac{1}{n!}\left.\frac{\partial^n}{\partial s^n}Z(s)\right\vert_{s=0}.
\end{align*}
Note that the first terms can be computed explicitly, they are given by  the nested commutators
\begin{align*}z(1)=a+b, \ \ z(2)=\frac{1}{2}[a,b], \ \ z(3)=\frac{1}{6}[z(2),b-a],\ \ z(4)=\frac{1}{12}[[z(3),a],b].
\end{align*}
In order to analyze all the coefficients, we follow the well known strategy  found in the book of Varadarajan \cite{Varadarajan}.
It is a well known fact that for $s$ sufficiently small the series $\sum_{n=1}^\infty s^nz(n)$ converges absolutely \cite{Mueger,Suzuki}. 
Furthermore, for all such small $s$, $Z(s)$ is a solution to  the differential equation \cite[Lemma 2.15.2]{Varadarajan}
\begin{align*}
    \frac{dZ}{ds}=a+b+\frac{1}{2}[a-b,Z]+\sum_{p=1}^\infty\frac{B_{2p}}{(2p)!}\text{ad}_Z^{2p}(a+b),
\end{align*}
with initial condition $Z(0)=0$. Here $\text{ad}_Z^0(a)=a$, $\text{ad}_Z^k(a)=[Z,{\text{ad}_Z}^{k-1}(a)]$ and $B_{2p}$ are the Bernoulli numbers.
This allows one to determine {\em all} the coefficients $z(n)$, as done in e.g. \cite[Lemma 2.15.3]{Varadarajan}. Indeed,  they can be shown to be uniquely determined by the following recursion formulae
\begin{align}
&z(1)=a+b;\label{expansion}\\
&z(n)=\frac{1}{2n}[a-b,z(n-1)]+\frac{1}{n}\hspace{-.2cm}\sum_{\substack{p\geq 1\\ 2p\leq n-1}}\hspace{-.3cm}\frac{B_{2p}}{(2p)!}\hspace{-.2cm}\sum_{\substack{k_1,\cdots ,k_{2p}>0\\ k_1+\cdots k_{2p}=n-1}}\hspace{-.6cm}[z(k_1),[\cdots [z(k_{2p}),a+b]{\scriptsize\cdots} ]\nonumber\\&=: z(a,b,[\cdot,\cdot],n). \label{expansion2}
\end{align}
Observe that we include the commutator in the definition of $ z(a,b,[\cdot,\cdot],n)$ since we will replace it later also with a Poisson bracket $\{\cdot,\cdot\}$.

\subsection{Method of majorants}\label{Methodofmaj}
There exist several approaches proving the convergence of the series $Z(1)$ \cite{Mueger,Suzuki}. For reasons that will become clear in the next paragraph, we follow the {\em method of majorants} originating  with Varadarajan  \cite[Theorem~2.15.4]{Varadarajan}, briefly outlined in what follows. 
First, as seen above the coefficients $z(n)$ are determined by a recursion scheme involving only rational coefficients (defining the $z(n)$). One can set up a parallel recursion scheme where these coefficients are replaced by their absolute values. The solution to this parallel problem is always positive and can be shown to define a convergent solution, while the original solution is majorized by the solution to the parallel problem. This trick of ``comparing'' solutions  will therefore prove  convergence of the original problem. 
\\\\
\noindent
To this avail, consider the differential equation
\begin{align}
    &\frac{dy}{dz}=\frac{1}{2}y+\sum_{p=1}^\infty\frac{|B_{2p}|}{(2p)!}y^{2p},\label{ode}\\
    &y(0)=0.\nonumber
\end{align}
It can be shown that there is a uniform constant  $\delta>0$ such that the above equation admits a solution $y$ which is holomorphic in the disc $\{z \mid |z| <\delta \}$, \cite{Varadarajan}. The coefficients appearing in the ensuing power series defining $y$ satisfy a similar recurrence relation as the $z(n)$, \textit{cf.} \eqref{expansion2}.
The key idea is that the $z(n)$ can be majorized by the coefficients defining $y$ up to an exponential factor. This allows one to prove convergence of the BCH series $Z(1)$. The precise result is stated below, further details regarding the method of majorants can be found in the book by Varadarajan \cite{Varadarajan}.
\begin{theorem}[Theorem~2.15.4 in \cite{Varadarajan}]\label{varr}
Let $\delta>0$ be given as above. The series $Z(1)=\sum_{n=1}^\infty z(n)$ converges absolutely for all $a,b \in \mathfrak{L}:=\{c \in M_\ell(\mathbb{C})\ | \ \|c\|< \delta/4\}$, its sum defines an analytic map from $\mathfrak{L}\times\mathfrak{L}$ into $M_\ell(\mathbb{C})$, and $e^{a}e^{b}=e^{Z}$. 
\end{theorem}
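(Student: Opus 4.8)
The plan is to prove the theorem by the method of majorants as laid out in $\S\ref{Methodofmaj}$, following Varadarajan \cite[Theorem~2.15.4]{Varadarajan} line by line. First I would set up the scalar majorant problem: I consider the holomorphic function $y(z)=\sum_{n\geq 1}\gamma(n)z^n$ solving \eqref{ode} with $y(0)=0$, and observe that plugging the power series into the ODE and equating coefficients yields a recursion
\begin{align*}
\gamma(1)=1,\qquad \gamma(n)=\frac{1}{2n}\gamma(n-1)+\frac{1}{n}\sum_{\substack{p\geq 1\\ 2p\leq n-1}}\frac{|B_{2p}|}{(2p)!}\sum_{\substack{k_1,\dots,k_{2p}>0\\ k_1+\cdots+k_{2p}=n-1}}\gamma(k_1)\cdots\gamma(k_{2p}),
\end{align*}
which is precisely the scalar analogue of \eqref{expansion2} with all rational coefficients replaced by their absolute values and every commutator $[\cdot,\cdot]$ replaced by a product. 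Since $y$ is holomorphic on $\{|z|<\delta\}$, we have $\gamma(n)\leq C\delta^{-n}$ for any slightly smaller radius, say $\gamma(n)\le C(\delta')^{-n}$ for $\delta'<\delta$; absorbing constants one gets a clean bound of the form $\gamma(n)\le (2/\delta)^{n}$ (this is where the explicit constant $\delta/4$ in the statement comes from, after accounting for the factor-of-two losses below).

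The second step is the comparison estimate. I would prove by strong induction on $n$ that
\begin{align*}
\|z(n)\|\ \leq\ 2^{n-1}\,\bigl(\|a\|+\|b\|\bigr)\,\gamma(n),
\end{align*}
or a bound of this shape. For the base case $n=1$, $\|z(1)\|=\|a+b\|\le\|a\|+\|b\|=(\|a\|+\|b\|)\gamma(1)$. For the inductive step, I apply the recursion \eqref{expansion2}, use submultiplicativity of the operator norm together with $\|[x,y]\|\le 2\|x\|\|y\|$ on each nested commutator, and insert the inductive hypothesis for each $z(k_i)$ with $k_i<n$; the sums over compositions $k_1+\cdots+k_{2p}=n-1$ and over $p$ are controlled term-by-term by the corresponding sums in the $\gamma$-recursion, and the extra powers of $2$ coming from $\|[x,y]\|\le 2\|x\|\|y\|$ (there are $2p$ commutators, contributing $2^{2p}$, but the number of $z$-factors is $2p+1$ and the product of their $2^{k_i-1}$ prefactors is $2^{n-1-2p}$, so the net power of $2$ works out to $2^{n-1}$ uniformly) assemble into exactly the stated bound. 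Then for $a,b\in\mathfrak{L}$, i.e.\ $\|a\|,\|b\|<\delta/4$, one has $\|a\|+\|b\|<\delta/2$, and $\sum_n 2^{n-1}(\delta/2)\gamma(n)$ converges because $\sum_n (2\cdot\text{radius}^{-1})^{-n}\cdot\text{const}$ is a convergent geometric-type series whenever the radius of convergence of $\sum\gamma(n)w^n$ exceeds $2\cdot(\delta/2)\cdot(\text{the relevant bookkeeping factor})$ — the constant $\delta/4$ in the hypothesis is chosen exactly so this closes. Hence $Z(1)=\sum_n z(n)$ converges absolutely.

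The third step is to upgrade absolute convergence to the three asserted conclusions. Analyticity of $(a,b)\mapsto Z$ on $\mathfrak{L}\times\mathfrak{L}$ follows because each $z(n)$ is a (noncommutative) polynomial, hence entire, in $(a,b)$, and the series converges uniformly on compact subsets of $\mathfrak{L}\times\mathfrak{L}$ by the majorant bound just established (uniform on $\{\|a\|,\|b\|\le r\}$ for any $r<\delta/4$), so the limit is analytic by the usual Weierstrass/Montel argument. Finally, the identity $e^{a}e^{b}=e^{Z}$: one notes that for a one-parameter rescaling $a\mapsto sa$, $b\mapsto sb$ with $s$ small the formal BCH identity $e^{sa}e^{sb}=e^{Z(s)}$ holds as an identity of convergent power series in $s$ — this is the content of the differential equation for $Z(s)$ recalled in $\S\ref{App:proofs}$, with $Z(0)=0$ — and since both sides are analytic in $s$ on a neighborhood of $[0,1]$ when $\|a\|,\|b\|<\delta/4$ (the left side trivially, the right side by the convergence result), the identity extends from small $s$ to $s=1$ by analytic continuation.

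\textbf{Main obstacle.} The routine-looking but genuinely delicate point is the bookkeeping in the inductive comparison estimate: getting a \emph{single} clean exponential factor $2^{n-1}$ (rather than something that grows faster than geometrically and destroys the radius) requires carefully matching the number of commutators $2p$ against the number of $z$-factors $2p+1$ in each term of \eqref{expansion2}, so that the $2^{2p}$ from the commutator bounds and the $\prod 2^{k_i-1}=2^{n-1-2p}$ from the inductive prefactors combine to the $n$-independent-in-$p$ quantity $2^{n-1}$. If instead one were sloppy and bounded $\|[z(k_i),w]\|$ crudely, the power of $2$ would depend on $p$ in a way that, once summed against $|B_{2p}|/(2p)!$, could shrink the effective radius below what the $\delta/4$ hypothesis allows. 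Everything else — the scalar recursion, the geometric tail, the analyticity, the analytic continuation to $s=1$ — is standard once this estimate is in hand, and indeed all of it is carried out in \cite{Varadarajan}, so in the write-up I would state the majorant lemma, give the induction with the power-of-two accounting made explicit, and cite Varadarajan for the holomorphy of $y$ and the remaining soft steps.
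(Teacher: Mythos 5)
Your proposal is correct and takes essentially the same route the paper does: the paper offers no proof of this statement beyond citing \cite{Varadarajan} and sketching the method of majorants in $\S$\ref{Methodofmaj}, and your scalar recursion for $\gamma(n)$, the inductive comparison estimate, and the analytic continuation in $s$ are exactly that argument (indeed they mirror the paper's own classical adaptation in Proposition~\ref{Prop: fundvarvar}, where the bound is $\|\bar z_\infty(m)\|\leq M^{m-1}(2\alpha)^m\gamma(m)$). Only note the small slip in your displayed estimate: it should be homogeneous of degree $n$, i.e.\ $\|z(n)\|\leq 2^{n-1}\bigl(\|a\|+\|b\|\bigr)^{n}\gamma(n)$, which is what your power-of-two bookkeeping and your derivation of the radius $\delta/4$ already presuppose.
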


\subsection{A ``classical'' counterpart of the BCH formula}
To see how this adapts to our case, we take a finite set $X\subset\mathbb{Z}$,  $N=|X|$, two fixed positive natural numbers $M,M'$ and consider $$a_N=\pi_N^M(a_M)\quad \mbox{and}\quad b_N=\pi_N^{M'}(b_{M'}) $$ two strict symmetric sequences, i.e., continuous cross-sections of the symmetric $C^*$-bundle. We are interested in the sequence $\bar Z_{N}\equiv \bar{Z}_{N,1}$ given by \eqref{section}, that is, the element $Z(s)$ defined by \eqref{eq 28} for $s=1$ with $a$ and $b$ from above replaced by $Na_N$ and $Nb_N$, and scaled by an overall factor $1/N$. According to the analysis above $\bar Z_N$ assumes the following formal form
\begin{align}\label{barZ}
\bar Z_N=\frac{1}{N}\sum_{n=1}^\infty N^nz_N(n)=\sum_{n=1}^\infty \bar z_N(n),
\end{align}
where $$z_N(n)=z(Na_N,Nb_N,[\cdot,\cdot],n)\quad\mbox{ and }\quad
 \bar z_N(n)=  \frac{1}{N}z_N(n)$$
i.e., we obtain a similar expression as \eqref{expansion2}
$$ \bar z_N(1)=a_N+b_N,\; \bar z_N(2)=\frac{N}{2}[a_N,b_N],\;\bar{z}_N(3)=\frac{N}{6}[\bar z_N(2),b_N-a_N],\,\ldots.$$
\noindent
We do not claim that the formal sum \eqref{barZ} converges for finite $N$.

First of all, from the  definition of the $\bar z_N(n)$ it is not at all evident how to bound the norms of $a_N$ and $b_N$ such that $Z_N$ indeed converges. The standard estimate $\|[a_N,b_N]\|_N\leq 2\|a_N\|\|b_N\|$ does not work because of the pre-factors $N$ causing divergent behavior. 

A more sophisticated bound would be to use the Dirac-Groenewold-Rieffel condition, \textit{cf.} \eqref{Eq: Dirac-Groenewold-Rieffel condition} allowing to write each $\bar{z}_N(n)$ as a quantized  $n$-fold Poisson bracket up to order $O(1/N)$. However, it is not true that the $O(1/N)$ error terms hold independently of $n$. It is therefore not at all clear whether the existence of the series \eqref{barZ} holds for any finite $N$, which makes  a usual analysis as done in e.g. \cite{Mueger,Suzuki} not possible. 

Nonetheless, at the level of equivalence classes we show below that the series indeed exists and defines an element of $[\mathfrak{A}]_\infty^\pi$.
\\\\
\noindent
To this end, we focus on the method of majorants, introduced in $\S \ref{Methodofmaj}$.
It is precisely these techniques that will be adapted to prove convergence in our case. In fact, we first prove the existence of a ``classical'' analog of the BCH formula, and then use an algebraic argument to obtain convergence of the function $F_N(t)$, as $N\to\infty$ at least for small $t$. To this end, we proceed as follows.
\\\\
\noindent
We define the truncation of $\bar Z_N$ given by \eqref{barZ} as
\begin{align}
    \bar{Z}_N^k:=\sum_{n=1}^kN^{n-1}z_N(n)=\sum_{n=1}^k\bar{z}_N(n).
\end{align}
The next lemma shows that the truncated series is a continuous cross-section.
\begin{lemma}\label{Lemm: fundamental1}
Let $\Lambda,\Lambda'\subset X$ with $M=|\Lambda|$, $M'=|\Lambda'|$ and $N=|X|$.  Consider $a'\in \mathfrak{A}^\Lambda$ and $b'\in\mathfrak{A}^{\Lambda'}$. For $N\geq M,M'$ associate the symmetric sequences $a_N=\pi_N^M(a')$ and  $b_N=\pi_N^{M'}(b')$, as defined $\S \ref{MFQss and all that}$. For any fixed $k>0$, the section defined by 
\begin{align*}
A_N:=\begin{cases}
    \bar{Z}_{N}^k, &N\in\mathbb{N};\\
    [\bar{Z}_{N}^k]_N, &N=\infty
\end{cases}
\end{align*}
is continuous and quasi-symmetric. In particular,  there exists $\bar{Z}_\infty^k\in C(S(\gA))$ such that
 \begin{align}\label{finite k}
 \lim_{N\to\infty}\|Q_N(\bar{Z}_\infty^k)-\bar{Z}_N^k\|_N=0,
 \end{align}
and $\bar{Z}_\infty^k$ is a linear combination of $k$ terms of $n$-fold nested Poisson brackets, for $n\leq k$.
 \end{lemma}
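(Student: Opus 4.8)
The plan is to prove the statement by induction on $k$, exploiting the recursion \eqref{expansion2} which expresses $z_N(n)$ (hence $\bar z_N(n) = N^{n-1} z_N(n)/N$... more precisely $\bar z_N(n)$) in terms of nested commutators of the lower-order $\bar z_N(j)$, together with the Dirac-Groenewold-Rieffel condition \eqref{Eq: Dirac-Groenewold-Rieffel condition} which converts a single rescaled commutator $iN[Q_N(f),Q_N(g)]$ into the quantization $Q_N(\{f,g\})$ of a Poisson bracket, up to a norm error $o(1)$ as $N\to\infty$. Since $k$ is fixed, only finitely many applications of the DGR condition are needed, so the errors can be controlled one at a time; this is exactly why the statement is restricted to finite $k$ and why the remark preceding the lemma stresses that the $O(1/N)$ errors are not uniform in $n$.

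First I would set up the base case: for $k=1$, $\bar Z_N^1 = \bar z_N(1) = a_N + b_N = \pi_N^M(a') + \pi_N^{M'}(b')$, which is manifestly a sum of two symmetric sequences, hence quasi-symmetric, and $\bar Z_\infty^1 = a_\infty + b_\infty \in C(S(\gA))$ is the corresponding principal symbol with $\lim_N \|Q_N(\bar Z_\infty^1) - \bar Z_N^1\|_N = 0$ by \eqref{asymptotic equivalence 2}-type estimates (or directly because $a_N, b_N$ are already in the image of $\pi_N^\bullet$ up to the $O(1/N)$ discrepancy between $\pi_N^M(a')$ and $Q_N$ of the associated polynomial). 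For the inductive step, assume $\bar Z_N^{k-1} = \sum_{n=1}^{k-1}\bar z_N(n)$ is quasi-symmetric with a symbol $\bar Z_\infty^{k-1}$ that is a polynomial (restriction of a polynomial on $\mathbb{R}^{\ell^2-1}$), and in fact assume the stronger statement that each individual $\bar z_N(n)$, $n\le k-1$, is quasi-symmetric with polynomial symbol $\bar z_\infty(n)$ given by an $n$-fold nested Poisson bracket. Then consider $\bar z_N(k)$. From \eqref{expansion2}, $\bar z_N(k) = \frac{1}{N}z_N(k)$ where $z_N(k) = z(Na_N, Nb_N, [\cdot,\cdot], k)$; unwinding the scaling, each term of $z_N(k)$ has the shape $N^{k-1}$ times a fixed rational multiple of a $(k-1)$-fold... i.e. $\bar z_N(k)$ equals $\frac{N}{2k}[a_N - b_N, \bar z_N(k-1)]$ plus terms $\frac{1}{k}\frac{B_{2p}}{(2p)!}$ times $N^{p}$-weighted nested commutators $[\bar z_N(k_1),[\cdots[\bar z_N(k_{2p}), a_N+b_N]\cdots]$ with $k_1+\cdots+k_{2p}=k-1$. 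Each such term is, after distributing the powers of $N$ across the individual commutator brackets, a composition of maps of the form $(f,g)\mapsto iN[Q_N(f), Q_N(g)]$ applied to the quasi-symmetric sequences furnished by the inductive hypothesis; applying the DGR condition \eqref{Eq: Dirac-Groenewold-Rieffel condition} finitely many times (once per bracket, at most $k-1$ brackets, all with $k$ fixed), together with the fact that $Q_N$ maps the dense Poisson algebra $\widetilde{\mathcal{A}}_\infty^\pi$ into $\mathcal{A}_N^\pi$ and that products/sums of quasi-symmetric sequences are quasi-symmetric, shows that $\bar z_N(k)$ is asymptotically equal in norm to $Q_N(\bar z_\infty(k))$ where $\bar z_\infty(k)$ is the corresponding iterated Poisson bracket of the $\bar z_\infty(k_i)$ and $a_\infty \pm b_\infty$ — a polynomial, hence in $\widetilde{\mathcal{A}}_\infty^\pi \subset C(S(\gA))$. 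Summing $n=1,\dots,k$ gives the claim for $\bar Z_N^k$, with $\bar Z_\infty^k = \sum_{n=1}^k \bar z_\infty(n)$ a linear combination of $n$-fold nested Poisson brackets, $n\le k$.

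The main obstacle, and the point requiring care, is the bookkeeping of the powers of $N$: in the term $N^{\text{(something)}}[\bar z_N(k_1),[\cdots,[\bar z_N(k_{2p}), a_N+b_N]\cdots]$ one must check that the total power of $N$ is exactly the number of nested brackets ($2p$), so that each bracket absorbs precisely one factor of $N$ and the DGR condition applies with no leftover powers of $N$ and no vanishing prefactor — this follows from the homogeneity of the recursion (the term indexed by $(k_1,\dots,k_{2p})$ with $\sum k_i = k-1$ carries total $N$-degree $(k-1) - (k-1) + 2p$... i.e. one computes $\bar z_N(n) = N^{n-1} z_N(n)$ where $z_N(n)$ built from $Na_N, Nb_N$ via $n-1$ brackets contributes $N^n$, and dividing by $N$ and tracking the recursion yields the correct count), but it must be verified by a short induction on the degree. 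A secondary technical point is making sure the DGR errors compose: since each $Q_N(\bar z_\infty(k_i))$ differs from $\bar z_N(k_i)$ by $o(1)$ in norm and the brackets $iN[\cdot,\cdot]$ are not norm-continuous uniformly in $N$, one should apply DGR to the outermost bracket first (writing $iN[\bar z_N(k_1), w_N] = Q_N(\{\bar z_\infty(k_1), \cdot\}) + o(1)$ where $w_N$ is the already-processed inner part, using that $w_N = Q_N(\text{poly}) + o(1)$ and that $Q_N(\text{poly})$ stays bounded), and iterate inward; because $k$ is fixed this terminates in finitely many steps and all $o(1)$ errors add up to $o(1)$. Everything else — closure of quasi-symmetric sequences under linear combinations and the passage to the equivalence class $[\bar Z_N^k]_N \in [\mathfrak{A}]_\infty^\pi$ with symbol $\bar Z_\infty^k$ under the isomorphism $j$ of \eqref{map j} — is routine given the framework of $\S\ref{MFQss and all that}$.
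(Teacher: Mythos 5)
Your proof takes essentially the same route as the paper's: start from the fact that the symmetric sequences $(a_N)_N$, $(b_N)_N$ arise as $Q_N(a_\infty)$, $Q_N(b_\infty)$; handle $n=1$ by linearity of $Q_N$; handle $n=2$ by the DGR condition \eqref{Eq: Dirac-Groenewold-Rieffel condition}; and handle general finite $n$ by observing that $\bar z_N(n)$ is a finite linear combination of $N$-rescaled nested commutators of lower-order terms, to each of which DGR applies, so that the classical limit $\bar z_\infty(n)$ is obtained from the recursion \eqref{expansion}--\eqref{expansion2} with $[\cdot,\cdot]$ replaced by $\{\cdot,\cdot\}$, and then summing over $n\le k$. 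The paper's own proof compresses your inductive step into the single sentence "the same holds for commutators of arbitrary continuous cross sections''; your write-up makes explicit two points the paper leaves implicit, namely the $N$-power bookkeeping (that $\bar z_N(n)$ carries exactly $N^{n-1}$ distributed over $n-1$ nested brackets, so DGR applies with no leftover powers) and the composition of the DGR error terms (applying DGR from the outermost bracket inward, finitely many times for fixed $k$). Both points are correct and are genuine gaps in the paper's exposition that you have usefully filled in, though the error-composition step does implicitly rely on the error in DGR for polynomial arguments being itself of the form $Q_N(\text{polynomial})$ of controlled size, a structural fact from \cite{LMV} that neither you nor the paper spells out.
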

\begin{proof}
On account of the results in \cite[$\S$3]{LMV}, the sequences $(a_N)_N$ and $(b_N)_N$ are symmetric, therefore they arise as quantized polynomials $a_\infty$ and $b_\infty$, i.e. $a_N=Q_N(a_\infty)$ and $b_N=Q_N(b_\infty)$. The polynomials $a_\infty$ and $b_\infty$ are just the classical limits of the ensuing continuous cross-sections, \textit{cf.} \eqref{8.46}.

For $n=1$, the sequence $\bar{z}_N(1)$ satisfies $\bar{z}_N(1)=Q_N(a_\infty)+Q_N(b_\infty)=Q_N(a_\infty+b_\infty)$, by linearity. On account of \eqref{ctssection}, the ensuing cross-section is continuous with limit $\bar{z}_\infty(1)=a_\infty+b_\infty$. It is easy seen that the same holds for  linear combinations of arbitrary (quasi) symmetric continuous cross-sections.

For $n=2$, the Dirac-Groenewold-Rieffel condition \eqref{Eq: Dirac-Groenewold-Rieffel condition}, yields that $\bar{z}_N(2)=N[a_N,b_N]$ is asymptotically norm-equivalent to the sequence defined by $Q_N(\{a_\infty,b_\infty\})$. By \eqref{ctssection}, this is  continuous, so that  $\bar{z}_N(2)$ is continuous itself, with limit $\bar{z}_\infty(2)=\{a_\infty,b_\infty\}$. In fact, $\bar z_N(2)$ induces a quasi-symmetric sequence, as also seen from Example \ref{Example}. The same holds for commutators of arbitrary continuous cross sections.

For $n$ finite but arbitrary, $\bar{z}_N(n)$ is a linear combination of nested commutators of continuous cross sections. The classical limit is given by \eqref{expansion}--\eqref{expansion2} replacing the  commutators (for $\bar{z}_N(n))$  with the Poisson brackets.

As this remains true for (finite) sums, we conclude that $(\bar Z_N^k)_N$ is continuous, with limit $\bar Z_\infty^k=\sum_{n=1}^k\bar{z}_\infty(n)\in C(S(\mathfrak{A}))$. Since $C(S(\mathfrak{A}))\cong [\mathfrak{A}]_\infty^\pi$, \textit{cf.} \eqref{map j}, the ensuing cross-section $(A_N)_{N\in\overline{\mathbb{N}}}$ is continuous and quasi-symmetric.  By construction, \eqref{finite k} holds.
 \end{proof}
 
As already seen in the proof of Lemma \ref{Lemm: fundamental1} one can introduce a classical analog $\bar Z_{\infty}$ of  $\bar{Z}_N$ as follows. Let us denote by  $a_\infty$ and $b_\infty$, the classical limits of the sections $(a_N)_N$ and $(b_N)_N$, the latter being defined as $a_\infty(\omega):=\lim_{N\to\infty}\omega^N(a_N)$,  and similarly for $b_\infty$, \textit{cf.} \eqref{8.46} and \eqref{map j}. The series $\bar{Z}_\infty$ is now  defined by substituting $N$ times the commutator by the Poisson bracket yielding a series defined by the coefficients given by the classical limit of 
\eqref{expansion2}
\begin{align}
    \bar z_\infty(n)= z (a_\infty,b_\infty, \{\cdot,\cdot\} ,n),
\end{align}
i.e., 
\begin{align*}
   \bar{z}_\infty(1)=a_\infty+b_\infty,\;
    \bar{z}_\infty(2)=\frac{1}{2}\{a_\infty,b_\infty\},\;
    \bar{z}_\infty(3)=\frac{1}{6}\{\bar{z}_\infty(2),b_\infty- a_\infty\},\;\ldots
\end{align*}
and so on. In other words, the coefficients $\bar{z}_{\infty}(n) $ are defined by the recurrence relation \eqref{expansion}--\eqref{expansion2}, with $[\cdot,\cdot]$ replaced by $\{\cdot,\cdot\}$.  Note that on account of the proof of Lemma \ref{Lemm: fundamental1} and the estimate
\begin{align*}
    |\omega^N(\bar{z}_N(n))-\omega^N(Q_N(\bar{z}_\infty(n)))|&\leq \|\bar{z}_N(n)-Q_N(\bar{z}_\infty(n))\|_N,
\end{align*}
each $\bar{z}_\infty(n)$ is the classical limit of the $\bar{z}_N(n)$ appearing in the Baker-Campbell-Hausdorff expansion.

Due to the Poisson bracket structure, it is  not directly clear how to impose single bounds on $a_\infty$ and $b_\infty$ such that the associated series $\bar{Z}_\infty$ converges. In particular, the supremum norm $\|\cdot\|_\infty$ does not entail a bound of the form $\|\{f,g\}\|_\infty\leq M\|f\|_\infty\|g\|_\infty$. Therefore, a similar analysis as done in \cite{Suzuki} seems not possible.  This is exactly the point where the method of majorants comes into play. First, let us formally  write
\begin{align*}
    &\bar{Z}_{\infty,t}:=\lim_{k\to\infty}\bar Z_{\infty,t}^k\quad\mbox{and}\quad \bar Z_{\infty,t}^k:=\sum_{n=1}^k\bar{z}_{\infty,t}(n),
\end{align*}
where $$\bar{z}_{\infty,t}(n) = z(ta_\infty,b_\infty,\{\cdot,\cdot\},n)$$ are the coefficients corresponding to the classical limits of $$\bar{z}_{N,t}(n)=\frac{1}{N} z(tNa_N,Nb_N,[\cdot,\cdot],n),$$ the latter are $t$-dependent and defined as those coefficients appearing in the Baker-Campbell-Hausdorff expansion for $b_N$ and $a_N$ replaced by $ta_N$. Similarly, one may consider also its quantum analog as a formal sum of the $\bar{z}_{N,t}(n)$ terms
\begin{align*}
\bar{Z}_{N,t}=\lim_{k\to\infty}\bar{Z}_{N,t}^k=\lim_{k\to\infty}\sum_{n=1}^k\bar{z}_{N,t}(n),
\end{align*}
with $a_N$ replaced by $ta_N$. As mentioned above, despite the fact that this limit may not exist for finite $N$, at the level of equivalence classes it will converge towards an element of $[\mathfrak{A}]_\infty^\pi$.
\\\\
\noindent
To this avail we proceed in the following manner. As seen in $\S \ref{MFQss and all that}$, the state space $S(\mathfrak{A})\cong \mathcal{Q}_\ell\subset\mathbb{R}^{\ell^2-1}$.
For $f\in C^\infty(\mathcal{Q}_\ell)$, define the following norm
\begin{align*}
    \|f\|^{(k)}:=\sum_{|\alpha|\leq k}\frac{1}{\alpha!}\|\partial^\alpha f\|_\infty<\infty,
\end{align*}
where $\alpha$ is a multi index and $\|\cdot\|_\infty$ is the supremum norm on $C(\mathcal{Q}_\ell)$. By direct inspection, this norm is sub-multiplicative. Indeed,
$$\partial^\alpha(fg)=\sum_{\beta\leq \alpha}\binom{\alpha}{\beta}(\partial^\beta f)(\partial^{\alpha-\beta}g),$$
implying that
\begin{align*}
\|fg\|^{(k)}&=\sum_{|\alpha|\leq k}\frac{1}{\alpha!}\|\partial^\alpha(fg)\|_\infty\\&\leq\sum_{|\alpha|\leq k}\sum_{\beta\leq\alpha}\frac{1}{\beta!(\alpha-\beta)!}\|\partial^\beta f\|_\infty\|\partial^{\alpha-\beta}g\|_\infty\\&\leq \|f\|^{(k)}\|g\|^{(k)}.
\end{align*}
If $\lim_{k\to\infty}\|f\|^{(k)}$ exists, we also define
\begin{align*}
    \|f\|:=\lim_{k\to\infty}\|f\|^{(k)}.
\end{align*}
It is not difficult to see  that $\|\cdot\|$ defines a sub-multiplicative norm as well, and moreover, $\|\cdot \|_\infty\leq\|\cdot\|$.%
\\\\
\noindent
The following result is crucial for the forthcoming analysis.
\begin{lemma}\label{normssss}
Let $f,g\in C^\infty(\mathcal{Q}_\ell)$ be such that both $\|f\|$ and $\|g\|$ are finite. Then, 
    \begin{align*}
        \|\{f,g\}\|\leq M\|f\|\cdot \|g\|,
    \end{align*}
    for some $M>0$ that does not depend on $f$ and $g$. Furthermore, all nested $n$-fold Poisson brackets of $f$ and $g$ have finite norm for each $n\in\mathbb{N}$.
\end{lemma}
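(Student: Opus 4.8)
The plan is to substitute the coordinate formula \eqref{Poisson} for the Poisson bracket and to estimate the resulting finite sum term by term, exploiting that the coefficient functions $x_c$ are polynomials of degree one. First I would record the two elementary facts that supply the constant. Each coordinate function has finite norm: since $\partial^\alpha x_c=0$ for $|\alpha|\geq 2$ and $\partial_a x_c=\delta_{ac}$, one has $\|x_c\|^{(k)}=\|x_c\|_\infty+1$ for every $k\geq1$, and $\|x_c\|_\infty<\infty$ because $\mathcal{Q}_\ell$ is compact; hence $R:=1+\max_{c}\|x_c\|_\infty<\infty$. Set also $K:=\sum_{a,b,c=1}^{\ell^2-1}|C_{ab}^c|$, a constant depending only on $\ell$.

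Writing $\{f,g\}=\sum_{a,b,c}C_{ab}^c\,x_c\,(\partial_a f)(\partial_b g)$ and using the triangle inequality together with the sub-multiplicativity of $\|\cdot\|$, the statement reduces to a single estimate $\|x_c\,(\partial_a f)(\partial_b g)\|\leq M_0\,\|f\|\,\|g\|$ with $M_0$ independent of $a,b,c$ and of $f,g$; then $M:=K M_0$ works and finiteness of $\|\{f,g\}\|$ comes for free. To establish the core inequality I would argue at the level of the truncations $\|\cdot\|^{(k)}$ and pass to the limit only at the end. Expanding $\partial^\gamma\big(x_c\,(\partial_a f)(\partial_b g)\big)$ by Leibniz, linearity of $x_c$ leaves only the two terms $x_c\,\partial^\gamma\!\big((\partial_a f)(\partial_b g)\big)$ and $\gamma_c\,\partial^{\gamma-e_c}\!\big((\partial_a f)(\partial_b g)\big)$, and the identity $\tfrac{\gamma_c}{\gamma!}=\tfrac{1}{(\gamma-e_c)!}$ shows the second one merely lowers the truncation index. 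Expanding the inner product by Leibniz once more and using $\tfrac1{\gamma!}\binom{\gamma}{\delta}=\tfrac1{\delta!(\gamma-\delta)!}$ together with the shift identity $\tfrac1{\beta!}=\tfrac{\beta_a+1}{(\beta+e_a)!}$, the weighted sums defining $\|x_c(\partial_a f)(\partial_b g)\|^{(k)}$ reorganize, after re-indexing, into a bounded multiple of a product of one sum built only from the derivatives of $f$ and one built only from the derivatives of $g$; letting $k\to\infty$ and invoking $\|f\|,\|g\|<\infty$ gives the bound.

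Granting the bilinear estimate, the last assertion is an induction on $n$: any $n$-fold nested Poisson bracket of $f$ and $g$ equals $\{A,B\}$ for nested brackets $A,B$ of strictly smaller order, the base cases $A,B\in\{f,g\}$ having finite norm by hypothesis, whence $\|\{A,B\}\|\leq M\|A\|\,\|B\|<\infty$.

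I expect the genuine difficulty to lie in the bilinear estimate, precisely in making $M_0$ independent of $f$ and $g$. Plain sub-multiplicativity is too crude here: it delivers only $\|\{f,g\}\|\leq RK\,\max_a\|\partial_a f\|\,\max_b\|\partial_b g\|$, and $\|\partial_a f\|$ is not controlled by $\|f\|$ (already for $f=x_a^{n}$ one has $\|\partial_a f\|/\|f\|=n/(1+\|x_a\|_\infty)\to\infty$). The saving must therefore be sought in not detaching $x_c$ from the derivatives and in organizing the Leibniz expansion so that the combinatorial price of absorbing the extra $\partial_a,\partial_b$ into the summation multi-indices is offset by the degree-one factor $x_c$ and the weights $1/\alpha!$, rather than accumulated; verifying that no factor growing with $k$—equivalently, with the polynomial degrees of $f$ and $g$—survives this reorganization is the heart of the proof, everything else being the triangle inequality and the sub-multiplicativity already in hand.
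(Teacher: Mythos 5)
Your route departs from the paper's: the paper's own proof is precisely the ``plain sub-multiplicativity'' argument you dismiss as too crude --- it writes $\{f,g\}=\sum_{a,b,c}C_{ab}^c\,x_c\,(\partial_a \tilde f)(\partial_b \tilde g)$ and bounds each summand by $\|x_c\|\cdot\|f\|\cdot\|g\|$ in a single step, producing the constant $M$ of \eqref{M}; the point you raise, that $\|\partial_a f\|$ is not controlled by $\|f\|$ in the norm $\|\cdot\|$, is exactly what that one-line estimate leaves unaddressed, so your diagnosis of where the difficulty sits is accurate. The problem is that your proposal stops exactly there: the core bilinear estimate $\|x_c(\partial_a f)(\partial_b g)\|\leq M_0\|f\|\,\|g\|$ with $M_0$ independent of $f,g$ is announced, reduced to a hoped-for reorganization of the Leibniz expansion, and then explicitly deferred (``verifying that no factor growing with $k$ survives\dots is the heart of the proof''). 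That is a genuine gap, not a routine verification left to the reader.

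Moreover, the reorganization cannot succeed, because the target inequality fails with a constant independent of $f$ and $g$. Take $\ell=2$, so $\mathcal{Q}_2$ is the Bloch ball with $\|x_c\|_\infty=1$, and let $f=x_1^n$, $g=x_1^n x_2$. Then $\{f,g\}=n\,x_1^{2n-1}\sum_c C_{12}^c x_c=\kappa\,n\,x_3x_1^{2n-1}$ with $\kappa=C_{12}^3\neq0$, and already the multi-indices $\alpha=(j,0,1)$, $0\leq j\leq 2n-1$, contribute $\sum_{j}\binom{2n-1}{j}=2^{2n-1}$ to $\|x_3x_1^{2n-1}\|$, so $\|\{f,g\}\|\geq |\kappa|\,n\,2^{2n-1}$, while $\|f\|=2^n$ and $\|g\|\leq 2^{n+1}$; hence $\|\{f,g\}\|/(\|f\|\,\|g\|)\geq |\kappa|\,n/4\to\infty$. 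The same loss occurs even with one argument fixed and linear, since $\|\{x_2,x_1^m\}\|$ is of order $m\,2^{m-1}$ against $\|x_2\|\,\|x_1^m\|=2^{m+1}$, so it also affects the recursion in which the lemma is later applied. The structural reason is that the Poisson bracket is a first-order bidifferential operator and differentiation is unbounded in the fixed norm $\|\cdot\|$ (it costs a factor comparable to the polynomial degree); the weights $1/\alpha!$ and the degree-one factor $x_c$ cannot absorb this, so no re-indexing of the Leibniz expansion in a single fixed norm yields a uniform $M_0$. An estimate of the desired shape requires either a scale of analytic norms with shrinking radius (Cauchy-type estimates, paying a factor $1/(\rho-\rho')$ per bracket) or a constant allowed to depend on the functions (e.g.\ on their degrees), and the same caveat applies to the statement you were asked to prove.
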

\begin{proof}
   By definition, for all $x\in \mathcal{Q}_\ell$
   $$\{f,g\}( x)=\sum_{a,b,c=1}^{\ell^2-1}C_{ab}^cx_c\frac{\partial \tilde f(x)}{\partial x_a}\frac{\partial \tilde g(x)}{\partial x_b},$$
where $\tilde f,\tilde g\in C^\infty(\mathbb{R}^{\ell^2-1})$ are arbitrary extensions of $f,g$, \textit{cf.} \eqref{Poisson}.
Since the norm $\|\cdot\|$ is sub-multiplicative, it follows that
$$\|\{f,g\}\|=\|\{\tilde f,\tilde g\}_{|_{\mathcal{Q}_\ell}}\|\leq\sum_{a,b,c}^{\ell^2-1}C_{ab}^c\|x_c\|\cdot\| f\|\cdot\| g\|=M\cdot\|f\|\cdot\|g\|,$$
since $\|x_c\|_\infty=1$, which follows from the fact that $\mathcal{Q}_\ell\cong S(M_\ell(\mathbb{C}))$, in particular $x_c=\omega(t_c)$, see \eqref{coordinates}, which is bounded by one, since $t_c$ is a   normalized generator
and the derivative of $x_c$ equals one, so that $\|x_c\|=2.$
Here, the constant $M>0$ is defined as
\begin{align}\label{M}
    M:=2(\ell^2-1)^3\max_{1\leq a,b,c\leq \ell^2-1}C_{ab}^c<\infty.
\end{align}
The second statement about nested Poisson brackets follows by induction.
\end{proof}

\begin{remark}
    Recall that a smooth function $w:U\to\mathbb{R}$ is real analytic if and only if for every compact set $K\subset U$ there is $C>0$ such that for every multi-index $\alpha$ the following bound holds
\begin{align*}\label{characterization}     \sup_{{ x}\in K}|\partial^\alpha w({ x})|\leq C^{|\alpha|+1}\alpha!,\end{align*}
see \cite{Ko60}. Now, finiteness of the norm $\|\cdot\|$ precisely gives this estimate for the partial derivatives. Although $S(\mathfrak{A})$ is not an open set, we can parameterize $S(\mathfrak{A})$  in terms of $\mathcal{Q}_\ell$, whose interior is an open subset of $\mathbb{R}^{\ell^2-1}$ \textit{cf.} \eqref{coordinates}. Therefore, one can still use the characterization above for analytic functions by looking at the interior of $S(\mathfrak{A})$ via $\mathcal{Q}_\ell$, where the parameterization is analytic.
\hfill$\blacksquare$
\end{remark}

Given $f,g\in C^\infty(\mathcal{Q}_\ell)$ with $\|f\|,\|g\|<\infty$, for $k\in\mathbb{N}$, we introduce the following sequence of subspace
$$\mathfrak{D}_k(f,g):=\text{span}\{\text{$n$-fold Poisson brackets involving $f$ and $g$}\ | \ 1\leq n\leq k\}.$$
with the convention that $\mathfrak{D}_0(f,g)=\text{span}\{f,g\}$, i.e. the linear span of the functions $f$ and $g$. Now, Lemma~\ref{normssss} yields  by induction that for any $w\in \mathfrak{D}_k(f,g)$, we have $\|w\|<\infty$ for any $k$. 
Since $\mathfrak{D}_k(f,g)\subset \mathfrak{D}_{k+1}(f,g)$, we can consider the completion $$\mathfrak{D}(f,g):= \overline{\bigcup_{k=0}^\infty\mathfrak{D}_k(f,g)}^{\|\cdot\|}$$
within $C^\infty(S(\mathfrak{A}))$.
 Moreover, $\mathfrak{D}(f,g)\neq\{0\}$ as soon as $f$ and $g$ are non-zero.
\\\\
\noindent
From the previous results it follows that
if $f,g$ is such that both $\|f\|$ and $\|g\|$ are finite, the  ensuing sequence defined by $\bar Z_{\infty}^k=\sum_{n=1}^k\bar z_{\infty}(n)$ forms a sequence in $\mathfrak{D}(f,g)$. 
\\\\
\noindent
For $\delta>0$ defined by Theorem~\ref{varr} and $M$ by \eqref{M}, consider now the open set
\begin{align}
 \mathfrak{U}=\{ f \in C^\infty(\mathcal{Q}_\ell)  \ |\ \|f\|< \frac{\delta}{2M} \}.
\end{align}

\begin{proposition}\label{Prop: fundvarvar}
Let $f,g\in C^\infty(\mathcal{Q}_\ell)$ be polynomials in $\mathfrak{U}$ and consider the associated vector space $\mathfrak{D}(f,g)$. 
Then, the series $\bar Z_{\infty}=\sum_{n=1}^\infty\bar z_\infty(n)$ converges absolutely  with respect to $\|\cdot\|$, that is, $\bar Z_{\infty}\in\mathfrak{D}(f,g)$.
\end{proposition}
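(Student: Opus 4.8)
The strategy is to transfer Varadarajan's method of majorants (Theorem~\ref{varr}) verbatim from the matrix setting $(M_\ell(\mathbb C),[\cdot,\cdot],\|\cdot\|_{\text{op}})$ to the Poisson setting $(\mathfrak D(f,g),\{\cdot,\cdot\},\|\cdot\|)$, exploiting that the only two ingredients used in the matrix proof are that the bracket is bilinear and antisymmetric, and that it is \emph{bounded} with respect to a submultiplicative norm. Lemma~\ref{normssss} supplies exactly the missing estimate $\|\{f,g\}\|\le M\|f\|\,\|g\|$ on $\mathfrak D(f,g)$, and the preceding discussion establishes that $(\mathfrak D(f,g),\|\cdot\|)$ is a Banach space closed under the Poisson bracket containing all the $\bar z_\infty(n)$. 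So the plan is: (i) rescale so the factor $M$ is absorbed, reducing to the situation covered by Varadarajan; (ii) invoke the majorant comparison to bound $\|\bar z_\infty(n)\|$ by the Taylor coefficients $c(n)$ of the scalar solution $y$ of \eqref{ode}; (iii) sum the resulting geometric-type series using $\|f\|,\|g\|<\delta/2M$.

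\textbf{Step (i): rescaling.} Define $\hat f := M f$, $\hat g := M g$ (or work with the Poisson bracket $M\{\cdot,\cdot\}$; either normalization works). Since $\|\cdot\|$ is submultiplicative and Lemma~\ref{normssss} gives $\|\{f,g\}\|\le M\|f\|\,\|g\|$, the rescaled bracket $\langle u,v\rangle := M\{u,v\}$ satisfies $\|\langle u,v\rangle\|\le \|u\|\,\|v\|$ on $\mathfrak D(f,g)$, i.e. it is a genuine (norm-$1$-bounded) Lie bracket on the Banach space $\mathfrak D(f,g)$. Tracking the homogeneity of the recursion \eqref{expansion}--\eqref{expansion2}: each $\bar z_\infty(n)$ is a sum of $(n{-}1)$-fold nested brackets of $a_\infty+b_\infty$ and $a_\infty-b_\infty$, hence replacing $\{\cdot,\cdot\}$ by $M\{\cdot,\cdot\}$ and $f,g$ by $f,g$ scales $\bar z_\infty(n)$ exactly by $M^{n-1}$; so it suffices to show $\sum_n M^{n-1}\|\hat{\bar z}_\infty(n)\|<\infty$ where $\hat{\bar z}_\infty(n)$ uses the normalized bracket $\langle\cdot,\cdot\rangle$ and has $\|a_\infty\|,\|b_\infty\|<\delta/(2M)$, so that $\|a_\infty+b_\infty\|,\|a_\infty-b_\infty\|<\delta/M$.

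\textbf{Step (ii)--(iii): majorant comparison and summation.} The coefficients $\bar z_\infty(n)$ obey the \emph{same} recurrence \eqref{expansion}--\eqref{expansion2} as the matrix coefficients $z(n)$, only with a different (still bounded, submultiplicative) bracket in place of $[\cdot,\cdot]$; consequently the termwise norm bound produced in the proof of Theorem~\ref{varr} — where one replaces the rational coefficients by their absolute values, compares with the scalar ODE \eqref{ode}, and obtains $\|z(n)\|\le C^n c(n)\,(\|a\|+\|b\|)^n$ for the Taylor coefficients $c(n)$ of the holomorphic solution $y$ on $\{|z|<\delta\}$ — goes through word for word with $\|\cdot\|_{\text{op}}$ replaced by $\|\cdot\|$ and $\|[a,b]\|\le 2\|a\|\|b\|$ replaced by $\|\langle u,v\rangle\|\le\|u\|\|v\|$. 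Since $y$ is holomorphic on the disc of radius $\delta$, $\limsup_n c(n)^{1/n}\le 1/\delta$, so the series $\sum_n c(n) r^n$ converges for $r<\delta$; applying this with $r = \|a_\infty\| + \|b_\infty\| < \delta/M \le \delta$ (after the $M^{n-1}$ rescaling is folded in, one checks the effective radius constraint is precisely $\|f\|,\|g\|<\delta/2M$, i.e. $f,g\in\mathfrak U$) yields absolute convergence of $\sum_n\|\bar z_\infty(n)\|$. Because each partial sum $\bar Z_\infty^k\in\mathfrak D(f,g)$ and $\mathfrak D(f,g)$ is $\|\cdot\|$-complete, the limit $\bar Z_\infty$ lies in $\mathfrak D(f,g)$, as claimed.

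\textbf{Main obstacle.} The delicate point is \emph{not} the Poisson bracket per se — Lemma~\ref{normssss} has already tamed it — but making sure that Varadarajan's majorant estimate is genuinely \emph{algebraic}, i.e. that it uses only bilinearity, antisymmetry, and one submultiplicative norm bound on the bracket, with no appeal to associativity of an underlying product or to the specific structure of $M_\ell(\mathbb C)$. One must verify that the auxiliary scalar ODE \eqref{ode} and its holomorphy radius $\delta$ are exactly the data needed, and that the bookkeeping of the $M^{n-1}$ homogeneity factor against the radius $\delta$ produces precisely the threshold $\delta/2M$ defining $\mathfrak U$ (the factor $2$ coming from $\|a_\infty\pm b_\infty\|\le\|a_\infty\|+\|b_\infty\|<2\cdot\delta/2M$). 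If one prefers to avoid re-deriving the majorant bound, an alternative is to apply Theorem~\ref{varr} abstractly to the Banach–Lie algebra $(\mathfrak D(f,g),\langle\cdot,\cdot\rangle)$ — noting that nothing in the statement or proof of Theorem~\ref{varr} uses that $\mathfrak L$ sits inside a matrix algebra rather than an arbitrary Banach–Lie algebra — and read off convergence of $\bar Z_\infty = \sum_n \bar z_\infty(n)$ on $\mathfrak U$ directly.
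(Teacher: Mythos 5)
Your proposal matches the paper's approach in essence: both apply Varadarajan's method of majorants (Theorem~\ref{varr}, the scalar ODE \eqref{ode} and its Taylor coefficients $\gamma(n)$), with Lemma~\ref{normssss} supplying the one new ingredient, namely the bound $\|\{f,g\}\|\le M\|f\|\|g\|$ on a Banach space closed under $\{\cdot,\cdot\}$. The difference is in how the constant $M$ is handled: the paper folds it directly into the majorant comparison, proving by induction the explicit bound $\|\bar z_\infty(m)\|\le M^{m-1}(2\alpha)^m\gamma(m)$ with $\alpha=\max\{\|f\|,\|g\|\}$, and then reads off convergence for $2M\alpha<\delta$, whereas you try to factor $M$ out by a preliminary rescaling of the bracket and then cite Varadarajan's argument ``word for word.''

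Your rescaling, however, is carried out with the wrong normalization, and the alternative you offer does not work either. If $\langle u,v\rangle := M\{u,v\}$, then $\|\langle u,v\rangle\|\le M\cdot M\|u\|\|v\|=M^2\|u\|\|v\|$ by Lemma~\ref{normssss}, not $\le\|u\|\|v\|$; the correct normalization that makes the bracket $1$-bounded is $\langle u,v\rangle:=\tfrac{1}{M}\{u,v\}$. Likewise, setting $\hat f:=Mf$, $\hat g:=Mg$ leaves the bracket constant unchanged, since bilinearity gives $\{Mf,Mg\}=M^2\{f,g\}$ and hence again $\|\{\hat f,\hat g\}\|\le M\|\hat f\|\|\hat g\|$ with the same $M$. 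With the corrected bracket $\tfrac{1}{M}\{\cdot,\cdot\}$, each $\bar z_\infty(n)$ contains exactly $n-1$ brackets, so $\hat{\bar z}_\infty(n)=M^{-(n-1)}\bar z_\infty(n)$ and your target sum $\sum_n M^{n-1}\|\hat{\bar z}_\infty(n)\|$ is indeed $\sum_n\|\bar z_\infty(n)\|$ — and the majorant bound then gives $\|\hat{\bar z}_\infty(n)\|\le(2\alpha)^n\gamma(n)$, so the series converges precisely when $2M\alpha<\delta$, i.e. on $\mathfrak U$. So the threshold comes out right once the sign error is fixed, and the remainder of your plan (partial sums $\bar Z_\infty^k$ Cauchy in the complete space $\mathfrak D(f,g)$) is exactly what the paper does. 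You should also note, as a matter of rigor, that Theorem~\ref{varr} as stated is specific to $M_\ell(\mathbb C)$, so ``applying it abstractly to the Banach–Lie algebra $(\mathfrak D(f,g),\langle\cdot,\cdot\rangle)$'' requires the (true, but unstated) observation that Varadarajan's proof uses only the Banach–Lie structure; the paper sidesteps this by re-deriving the estimate directly against the recursion \eqref{expansion}--\eqref{expansion2}.
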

\begin{proof}
The assertion is an adaptation of \cite[Lemma~2.15.3, Theorem~2.15.4]{Varadarajan} and a repeating application of Lemma \ref{normssss}.
Note that by construction, for each $n\in\mathbb{N}$, the nested Poisson brackets $\bar z_\infty(n)$ satisfy the recurrence relation \eqref{expansion}--\eqref{expansion2}, with $[\cdot,\cdot]$ replaced by $\{\cdot,\cdot\}$. For $n=1$, we estimate $\|\bar z_\infty(1)\|\leq 2\alpha$, where $\alpha=\max{\{\|f\|,\|g\|\}}$ and for $n\geq 2$, the equations relation \eqref{expansion}--\eqref{expansion2},  imply the estimate
\begin{align}\label{estimatess}
&(n+1)\|\bar z_\infty(n+1)\|\\
&\leq \alpha M\|\bar z_\infty(n)\|
+2\alpha\sum_{\substack{p\geq 1,\\ 2p\leq n}} \frac{|B_{2p}|}{(2p)!} \sum_{\substack{k_1,\cdots ,k_{2p}>0\\ k_1+\cdots k_{2p}=n}}M^{2p}\|\bar z_\infty(k_1)\| \cdots \|\bar z_\infty(k_{2p})\|\nonumber.
\end{align}
\noindent
Let $y$ be the solution of the differential equation \eqref{ode}, $$\frac{dy}{dz}=\frac{1}{2}y+\sum_{p=1}^\infty\frac{|B_{2p}|}{(2p)!}y^{2p}$$ 
which is analytic in $\{z \ | \ |z|< \delta \}$ by Theorem~\ref{varr}, and write
\begin{align}\label{sol01}
y(z)=\sum_{n\geq 1}\gamma(n) z^n.
\end{align}
Substituting \eqref{sol01} into \eqref{ode} gives
\begin{align}
 (n+1)\gamma(n+1)&=\frac{1}{2}\gamma(n)+\sum_{\substack{p\geq 1,\\ 2p\leq n}} \frac{|B_{2p}|}{(2p)!} \sum_{\substack{k_1,\cdots ,k_{2p}>0\\ k_1+\cdots k_{2p}=n}}\gamma(k_1)\cdots \gamma(k_{2p}),\label{gamma}\\
\gamma(1)&=1.\nonumber
\end{align}
Then, $\gamma(m)\geq 0$, for all $m$.  We now claim the following
\begin{align}\label{ind}
\|\bar z_\infty(m)\|\leq M^{m-1}(2\alpha)^m\gamma(m).
\end{align}
Indeed, since $\|z_\infty(1)\|\leq 2\alpha$, this is true for $m=1$. Suppose \eqref{ind} is true for all $1\leq n\leq m$. Then, from  \eqref{estimatess} and \eqref{gamma} we get that
\begin{align*}
&(m+1)\|\bar z_\infty(m+1)\|\\&\leq M^{m}(2\alpha)^{m+1}\frac{\gamma(m)}{2}+2\alpha\sum_{\substack{p\geq 1,\\ 2p\leq n}} \frac{ |B_{2p}|}{(2p)!} \sum_{\substack{k_1,\cdots ,k_{2p}>0\\ k_1+\cdots k_{2p}=m}}M^{m}(2\alpha)^{m}\gamma(k_1) \cdots \gamma(k_{2p})\nonumber
\\&=M^m(2\alpha)^{m+1}(m+1)\gamma(m+1).
\end{align*}
Thus, \eqref{ind} is true for all $m\geq 1$, completing the proof of the claim.
\smallskip

\noindent
Since the series \eqref{sol01} converges absolutely if $|z|<\delta$, the previous claim implies that $\sum_{n\geq 1}\bar z_\infty(n)$ converges absolutely whenever $2M\max{\{\|f\|,\|g\|\}}= 2M\alpha< \delta$, i.e. $f,g\in\mathfrak{U}$. As a result,  $\bar{Z}_\infty\in{\mathfrak{D}}(f,g)$.
\end{proof}

\subsection{Large deviation upper bounds}
Finally, we are in a position to prove the convergence of the logarithmic generating function, at least for small $t$, that is, at sufficiently large temperature. As indicated in  $\S$\ref{ldp in statmech} we assume the following set-up. Let $\Lambda,\Lambda'\subset X$ with $M=|\Lambda|$, $M'=|\Lambda'|$ and $N=|X|$. Consider $a'\in \mathfrak{A}^\Lambda$ and $h'\in\mathfrak{A}^{\Lambda'}$, and for $N\geq M,M'$ associate the symmetric sequences $a_N=\pi_N^M(a')$ and  $h_N=\pi_N^{M'}(h')$, as defined in $\S$\ref{MFQss and all that}. In Theorem \ref{prop: moment generating function 21} we provide sufficient conditions such that the function $F_N(t)$ defined by \eqref{momgen1}, i.e.,
\begin{align*}
F_N(t):=\frac{1}{N}\log{\frac{Tr[e^{Nh_N}e^{Nta_N}]}{Tr[e^{Nh_N}]}}=\frac{1}{N}\log{Tr[e^{N\bar Z_{N,t}}]}-\frac{1}{N}\log{Tr[e^{N h_N}]},
\end{align*}
has a limit as $N\to\infty$. We moreover characterize the pertinent limit. To this avail, we prove that $\bar{Z}_{N,t}$ defined by \eqref{section} induces a continuous cross-section of the symmetric bundle. For such sections, a recent result by van de Ven shows that the limit of $F_N$ indeed exists.

\begin{theorem}[Theorem~4.4 in \cite{Ven20222}]\label{vdv}
Let $(a_N)_N$ be a continuous cross-section of symmetric $C^*-$bundle with classical limit $a_\infty$. Then, the mean-field free energy satisfies
\begin{align*}
    \lim_{N\to\infty }\frac{1}{N}\log{Tr[e^{Na_N}]}=\sup_{\omega\in S(\mathfrak{A})}(a_\infty(\omega)-s_\infty(\delta_\omega)),
\end{align*}
where the function $s_\infty$ is the {\em mean-field entropy} defined for any  state $\omega\in S([\mathfrak{A}]_\infty^\pi)$ by
\begin{align}\label{def entropy}
 s_\infty(\omega):=-\lim_{N\to\infty}\frac{1}{N}Tr[\rho_N^\omega\log{\rho_N^\omega}].
\end{align}
Here $\rho_N^\omega$ is the density matrix associated with the state $\omega_N$, the latter given by restriction of $\omega$ to $S(\mathfrak{A}^N)$. For $\omega\in S(\mathfrak{A})$, the state $\delta_\omega$ is an extremal state on $C(S(\mathfrak{A}))\cong [\mathfrak{A}]_\infty^\pi$ given by point evaluation, i.e., $\delta_\omega(f)=f(\omega)$. In fact, all extremal states arising in this way.
\end{theorem}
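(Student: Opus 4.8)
The plan is to read the left‑hand side as a quantum pressure and to evaluate it by the Gibbs--Peierls--Bogoliubov variational principle, pushing the resulting variational problem to the macroscopic limit by the De Finetti structure of the symmetric bundle. For finite $N$ one has the exact identity
\begin{align*}
\frac{1}{N}\log Tr[e^{Na_N}] \;=\; \sup_{\rho_N}\Big( Tr[\rho_N a_N] \,-\, \tfrac{1}{N} Tr[\rho_N\log\rho_N]\Big),
\end{align*}
the supremum over all density matrices $\rho_N$ on $\mathfrak{A}^N=M_\ell(\mathbb{C})^{\otimes N}$, attained at the local Gibbs state $e^{Na_N}/Tr[e^{Na_N}]$. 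Since $(a_N)_N$ is a cross‑section of the symmetric bundle, $a_N$ is permutation invariant, so replacing $\rho_N$ by its average over $S_N$ leaves $Tr[\rho_N a_N]$ unchanged and, by unitary invariance together with concavity of $\rho\mapsto -Tr[\rho\log\rho]$, does not decrease the entropy term; hence the supremum may be restricted to permutation‑invariant $\rho_N$, that is, to the finite restrictions of symmetric states on the quasi‑local algebra. (Throughout we write the mean‑field entropy contribution as $-s_\infty(\delta_\omega)$, following the sign convention of the statement.)

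For the lower bound, I would fix $\omega\in S(\mathfrak{A})$ and evaluate the functional at the permutation‑invariant density matrix $\rho_\omega^{\otimes N}$ representing the restriction of $\delta_\omega$ to $\mathfrak{A}^N$. Then $Tr[\rho_\omega^{\otimes N}a_N]=\omega^N(a_N)\to a_\infty(\omega)$ by \eqref{8.46} and continuity of the section, while $\tfrac1N\big(-Tr[\rho_\omega^{\otimes N}\log\rho_\omega^{\otimes N}]\big)=-Tr[\rho_\omega\log\rho_\omega]$ is independent of $N$ and is precisely the mean‑field entropy of $\delta_\omega$ (which we write $-s_\infty(\delta_\omega)$, cf.\ \eqref{def entropy}). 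Therefore $\liminf_N\tfrac1N\log Tr[e^{Na_N}]\ge a_\infty(\omega)-s_\infty(\delta_\omega)$ for every $\omega$, and taking the supremum over $\omega$ yields the lower bound in the asserted identity.

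For the matching upper bound, let $\omega_N$ be the local Gibbs state; being permutation invariant it has, along a subsequence, a weak‑$*$ limit point $\omega_\infty$ on the quasi‑local algebra, which by the Quantum De Finetti Theorem — the same one underlying $[\mathfrak{A}]_\infty^\pi\cong C(S(\mathfrak{A}))$, \textit{cf.}\ \eqref{map j} — decomposes as $\omega_\infty=\int_{S(\mathfrak{A})}\omega^{\infty}\,d\mu(\omega)$ for a probability measure $\mu$ on $S(\mathfrak{A})$. Continuity of the section gives $\omega_N(a_N)\to\int a_\infty(\omega)\,d\mu(\omega)$. For the entropy, subadditivity of $-Tr[\,\cdot\,\log\,\cdot\,]$ together with permutation invariance yields $\tfrac1N\big(-Tr[\rho_N^{\omega_N}\log\rho_N^{\omega_N}]\big)\le\tfrac1k\big(-Tr[\sigma_N^{(k)}\log\sigma_N^{(k)}]\big)+o(1)$ for each fixed $k$, with $\sigma_N^{(k)}$ the $k$‑site marginal of $\rho_N^{\omega_N}$; letting $N\to\infty$ (so $\sigma_N^{(k)}\to\int\omega^{\otimes k}d\mu$) and then $k\to\infty$, and using that the classical mixing entropy of $\mu$ at resolution $k$ is $o(k)$, one gets $\limsup_N\tfrac1N\big(-Tr[\rho_N^{\omega_N}\log\rho_N^{\omega_N}]\big)\le-\int s_\infty(\delta_\omega)\,d\mu(\omega)$. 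Combining, $\limsup_N\tfrac1N\log Tr[e^{Na_N}]\le\int\big(a_\infty(\omega)-s_\infty(\delta_\omega)\big)d\mu(\omega)\le\sup_\omega\big(a_\infty(\omega)-s_\infty(\delta_\omega)\big)$, and with the lower bound the limit exists and equals the right‑hand side.

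The main obstacle is exactly this last entropy estimate — bounding $\limsup_N\tfrac1N\big(-Tr[\rho_N^{\omega_N}\log\rho_N^{\omega_N}]\big)$ in terms of the De Finetti barycentre $\mu$ — i.e.\ the affinity and upper semicontinuity of the mean‑field entropy along De Finetti decompositions; here one may lean on the established structural properties of the mean‑field entropy, \textit{cf.}\ \cite{Ven20222}. An alternative route that bypasses this abstract entropy analysis is to block‑diagonalise $\mathfrak{A}^N$ under the $S_N$‑action by Schur--Weyl duality: since $a_N$ is permutation invariant, $Tr[e^{Na_N}]=\sum_{\lambda}(\dim W_\lambda)\,Tr_{V_\lambda}[e^{Na_N^{(\lambda)}}]$ with $\lambda$ running over the polynomially many partitions of $N$ into at most $\ell$ parts and each $\dim V_\lambda$ polynomial in $N$, so $\tfrac1N\log Tr[e^{Na_N}]=\max_\lambda\big(\tfrac1N\log\dim W_\lambda+\tfrac1N\lambda_{\max}(a_N^{(\lambda)})\big)+o(1)$, where $\tfrac1N\log\dim W_\lambda\to H(\lambda/N)$ (hook‑length formula and Stirling) and $\tfrac1N\lambda_{\max}(a_N^{(\lambda)})\to\sup\{a_\infty(\omega)\mid\operatorname{spec}\rho_\omega=\lambda/N\}$ (the localisation of $\sigma(a_N)\to\operatorname{ran}(a_\infty)$ to the $\lambda$‑sector, via the coherent‑state structure of \cite{LMV}); optimising over $\lambda$ recovers $\sup_\omega\big(a_\infty(\omega)-s_\infty(\delta_\omega)\big)$, since $-Tr[\rho_\omega\log\rho_\omega]=H(\operatorname{spec}\rho_\omega)$ is the mean‑field entropy contribution $-s_\infty(\delta_\omega)$.
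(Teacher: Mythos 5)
This theorem is not proved in the paper at all: it is quoted verbatim from \cite{Ven20222} (Theorem 4.4 there) and invoked as a black box inside the proof of Theorem~\ref{prop: moment generating function 21}. So there is no internal proof to measure your sketch against; I can only assess it on its own terms.

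Your first route is the standard one and is structurally sound: the finite-$N$ Gibbs--Peierls--Bogoliubov identity, reduction to $S_N$-invariant density matrices, a lower bound via product states $\rho_\omega^{\otimes N}$, and an upper bound via a Quantum De Finetti decomposition of weak-$*$ limit points of the local Gibbs state. But notice that the entire weight of the theorem sits in the last step, and you say so yourself. Subadditivity gives $\tfrac1N S(\rho_N)\le\tfrac1k S(\sigma_N^{(k)})+o(1)$, but to conclude $\limsup_N\tfrac1N S(\rho_N)\le\int s_\infty(\delta_\omega)\,d\mu(\omega)$ you need the mean-field entropy to be \emph{affine} along De Finetti barycentres together with an upper semicontinuity statement in the appropriate topology; passing to $N\to\infty$ inside the $k$-marginals and then under the $d\mu$-integral is exactly the Fannes--Spohn--Verbeure machinery, which is what \cite{Ven20222} establishes. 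So this route is an honest reduction to the cited theorem, not an independent proof of it.

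Your Schur--Weyl alternative is a genuinely different and more instructive route. The bookkeeping is correct: the number of partitions of $N$ into at most $\ell$ parts and each $\dim V_\lambda$ are polynomial in $N$, $\tfrac1N\log\dim W_\lambda\to H(\lambda/N)$ by the hook-length formula and Stirling, and optimizing over $\lambda$ would produce the stated supremum. The load-bearing step you assert without proof is $\lambda_{\max}(a_N^{(\lambda)})\to\sup\{a_\infty(\omega):\operatorname{spec}\rho_\omega=\lambda/N\}$ (note the missing normalization in your writeup; $a_N$ is already order one). This requires a Berezin/coherent-state localization estimate in each irrep $V_\lambda$, uniformly over the spectrum profiles $\lambda/N$, i.e.\ a quantization-of-coadjoint-orbits version of \cite{LMV}, which treats only the totally symmetric sector $V_{(N,0,\dots,0)}$. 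That lemma is nontrivial; it is the hidden cost of bypassing the entropy analysis, not a shortcut around it.

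Finally, the sign caveat you flag parenthetically is real. With the definition \eqref{def entropy}, $s_\infty(\delta_\omega)=-Tr[\rho_\omega\log\rho_\omega]\ge 0$, and the Gibbs variational identity in your first display gives $\sup_\omega\bigl(a_\infty(\omega)+s_\infty(\delta_\omega)\bigr)$, not $\sup_\omega\bigl(a_\infty(\omega)-s_\infty(\delta_\omega)\bigr)$. The trivial check $a_N\equiv 0$, yielding $\log\ell$ on the left, is incompatible with the minus sign. So the two minus signs (in the statement and in \eqref{def entropy}) cannot both be correct; one of them is a transcription slip that should be corrected rather than worked around by silently reversing the meaning of $s_\infty$.
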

\begin{remark} 
The mean-field entropy always exists for states on $[\mathfrak{A}]_\infty^\pi$ and defines a weak-$*$ continuous functional on $S([\mathfrak{A}]_\infty^\pi)$.
\hfill$\blacksquare$
\end{remark}

\begin{theorem}\label{prop: moment generating function 21}
Assume the principal symbol $h_\infty$  of the continuous cross section $(h_N)_N$ is in $\mathfrak{U}$.
Then, for $t$ sufficiently small  the function $F_N(t)$ defined by \eqref{momgen1} 
converges as $N\to\infty$, and the limit is given by
$$F(t)=\lim_{N\to\infty}F_N(t)=\sup_{\omega\in S(\mathfrak{A})}(\bar{Z}_{\infty,t}-s_\infty(\omega))-\sup_{\omega\in S(\mathfrak{A})}(h_\infty(\omega)-s_\infty(\omega)),$$
where $s_\infty(\omega)$ is the mean-field entropy, \textit{cf.} Theorem~\ref{vdv}.
\end{theorem}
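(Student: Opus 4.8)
The plan is to reduce the statement to two applications of Theorem~\ref{vdv}. By \eqref{section}, $F_N(t)=\tfrac1N\log Tr[e^{N\bar Z_{N,t}}]-\tfrac1N\log Tr[e^{Nh_N}]$, and since $(h_N)_N$ is, by hypothesis, a continuous cross-section with principal symbol $h_\infty$, Theorem~\ref{vdv} immediately gives $\tfrac1N\log Tr[e^{Nh_N}]\to\sup_{\omega\in S(\mathfrak{A})}\bigl(h_\infty(\omega)-s_\infty(\delta_\omega)\bigr)$. Everything then rests on a single assertion: \emph{$(\bar Z_{N,t})_N$ is a continuous cross-section of the symmetric $C^*$-bundle whose principal symbol is $\bar Z_{\infty,t}$.} Granting this, a second application of Theorem~\ref{vdv} yields $\tfrac1N\log Tr[e^{N\bar Z_{N,t}}]\to\sup_\omega\bigl(\bar Z_{\infty,t}(\omega)-s_\infty(\delta_\omega)\bigr)$, and subtracting the two limits produces the stated formula for $F(t)$. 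A minor point to dispatch along the way is that $\bar Z_{N,t}=\tfrac1N\log(e^{Nta_N}e^{Nh_N})$ is not self-adjoint; however its symbol $\bar Z_{\infty,t}$ is real (Poisson brackets of real functions being real), so $(\bar Z_{N,t})_N$ is asymptotically norm-equivalent to a \emph{self-adjoint} continuous cross-section with the same symbol, and a Duhamel estimate together with the polynomial growth of $\dim\mathcal{A}_N^\pi$ shows that replacing $\bar Z_{N,t}$ by this representative does not affect $\lim\tfrac1N\log Tr[e^{N\bar Z_{N,t}}]$, so Theorem~\ref{vdv} applies to the representative.

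To establish the assertion I would assemble the ingredients already at hand. For $|t|$ small enough that $\|ta_\infty\|<\delta/(2M)$ — recall that $a_\infty,h_\infty$ are polynomials, the underlying sequences being strict symmetric, hence of finite $\|\cdot\|$-norm — both $ta_\infty$ and $h_\infty$ lie in $\mathfrak{U}$, so Proposition~\ref{Prop: fundvarvar}, applied with $f=ta_\infty$ and $g=h_\infty$, shows that the partial sums $\bar Z_{\infty,t}^k$ converge absolutely in $\|\cdot\|$ — hence, since $\|\cdot\|_\infty\le\|\cdot\|$, in the sup-norm of $C(S(\mathfrak{A}))\cong[\mathfrak{A}]_\infty^\pi$ — to $\bar Z_{\infty,t}\in\mathfrak{D}(ta_\infty,h_\infty)$. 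On the other hand, Lemma~\ref{Lemm: fundamental1} in its $t$-dependent form says that for every fixed $k$ the truncation $(\bar Z_{N,t}^k)_N$ is a quasi-symmetric continuous cross-section with symbol $\bar Z_{\infty,t}^k$, with $\|Q_N(\bar Z_{\infty,t}^k)-\bar Z_{N,t}^k\|_N\to 0$. What is still missing is the bridge from the truncations to the genuine matrix logarithm: a bound on the Baker–Campbell–Hausdorff tail $\|\bar Z_{N,t}-\bar Z_{N,t}^k\|_N$ that is uniform in $N$ and tends to $0$ as $k\to\infty$.

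This tail estimate is the crux, and it is exactly the difficulty flagged after \eqref{barZ}. One cannot control $\bar z_{N,t}(n)$ by comparing it termwise with $Q_N(\bar z_{\infty,t}(n))$ through the Dirac–Groenewold–Rieffel condition \eqref{Eq: Dirac-Groenewold-Rieffel condition}, because the $O(1/N)$ errors there are not uniform in $n$ and accumulate along the recursion \eqref{expansion}--\eqref{expansion2}. Instead, the idea is to transplant the method of majorants of Proposition~\ref{Prop: fundvarvar} to the operator level. Running the recursion \eqref{expansion2} directly for the matrices $\bar z_{N,t}(n)$, and using the sub-multiplicativity of Lemma~\ref{normssss} together with the commutator estimate $\|N[c_N,d_N]\|_N\le M\|c_\infty\|\,\|d_\infty\|+o(1)$ for cross-sections — a consequence of \eqref{Eq: Dirac-Groenewold-Rieffel condition} and of a uniform-in-$N$ norm bound for the quantization maps on polynomials ($\|Q_N(\cdot)\|_N\le\|\cdot\|$, cf.\ \cite{LMV}) — and the inductive identification of $\bar z_{\infty,t}(n)$ as the symbol of $(\bar z_{N,t}(n))_N$, one shows that $\|\bar z_{N,t}(n)\|_N$ is dominated, uniformly in $N$, by the same majorant $M^{n-1}(2\alpha)^n\gamma(n)$ ($\alpha=\max\{\|ta_\infty\|,\|h_\infty\|\}$, $\gamma$ as in \eqref{gamma}) that governs the classical series when $ta_\infty,h_\infty\in\mathfrak{U}$. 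The essential point is to majorize against this \emph{positive} sequence directly, in the spirit of Varadarajan's proof of Theorem~\ref{varr}, so that the Dirac–Groenewold–Rieffel errors are absorbed into the rapidly converging majorant rather than tracked term by term. Since $2M\alpha<\delta$, the majorant series converges, so $\sum_n\bar z_{N,t}(n)$ converges in $\mathcal{A}_N^\pi$ uniformly in $N$; an analytic-continuation argument (using analyticity of $s\mapsto\log(e^{sNa_N}e^{sNh_N})$ and uniqueness in the Baker–Campbell–Hausdorff differential equation of $\S\ref{App:proofs}$) identifies its sum with $\bar Z_{N,t}$, whence $\sup_N\|\bar Z_{N,t}-\bar Z_{N,t}^k\|_N\to 0$ as $k\to\infty$. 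Being a uniform limit of quasi-symmetric cross-sections, $(\bar Z_{N,t})_N$ is then itself a continuous cross-section, with symbol $\lim_k\bar Z_{\infty,t}^k=\bar Z_{\infty,t}$, which is the assertion.

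The remainder is bookkeeping: Theorem~\ref{vdv} applied to $(\bar Z_{N,t})_N$ (via its self-adjoint representative) and to $(h_N)_N$, and then subtracted, gives $F(t)=\sup_\omega\bigl(\bar Z_{\infty,t}(\omega)-s_\infty(\delta_\omega)\bigr)-\sup_\omega\bigl(h_\infty(\omega)-s_\infty(\delta_\omega)\bigr)$. I expect the main obstacle to be precisely the $N$-uniformity of the operator-level majorant estimate — making Varadarajan's scheme from the single matrix Lie algebra of Theorem~\ref{varr} survive the passage to the $N$-dependent fibers $\mathcal{A}_N^\pi$, where the Baker–Campbell–Hausdorff series need not converge for fixed $N$ and only the scaled commutators $N[\cdot,\cdot]$, not the commutators themselves, remain bounded.
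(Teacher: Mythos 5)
Your overall skeleton agrees with the paper: write $F_N(t)=\tfrac1N\log Tr[e^{N\bar Z_{N,t}}]-\tfrac1N\log Tr[e^{Nh_N}]$, show that $(\bar Z_{N,t})_N$ gives a continuous cross-section with classical limit $\bar Z_{\infty,t}$, and apply Theorem~\ref{vdv} twice, with Proposition~\ref{Prop: fundvarvar} and Lemma~\ref{Lemm: fundamental1} supplying the classical convergence of $\bar Z^k_{\infty,t}\to\bar Z_{\infty,t}$. But the step you rightly identify as the crux --- a tail bound $\sup_N\|\bar Z_{N,t}-\bar Z^k_{N,t}\|_N\to0$ obtained by running Varadarajan's majorant scheme directly on the matrices $\bar z_{N,t}(n)$ --- is a genuine gap, and it is exactly the step the paper declares unavailable. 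Your proposed commutator input $\|N[c_N,d_N]\|_N\le M\|c_\infty\|\,\|d_\infty\|+o(1)$ comes from the Dirac--Groenewold--Rieffel condition \eqref{Eq: Dirac-Groenewold-Rieffel condition}, whose $o(1)$ error depends on the sections involved; at level $n$ of the recursion \eqref{expansion}--\eqref{expansion2} these sections are themselves nested commutators whose symbols grow in degree with $n$, so the errors are not uniform in $n$ and accumulate along the recursion. Saying they are ``absorbed into the rapidly converging majorant'' is not an argument: at fixed $N$ the inputs $Nta_N$, $Nh_N$ have norms of order $N$, far outside the convergence ball of Theorem~\ref{varr}, and the paper explicitly states that convergence of \eqref{barZ} for finite $N$ (let alone uniformly in $N$) is not claimed and not clear. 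Your subsequent ``analytic continuation'' identification of the sum with the matrix logarithm, and the Duhamel/self-adjointness repair, inherit this unproven premise.

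The paper's actual proof avoids any finite-$N$ control of the BCH series by working at the level of equivalence classes in $[\mathfrak{A}]_\infty^\pi$: since $\|[c_N]_N\|=\lim_N\|c_N\|_N$ and $j$ is an isometric isomorphism onto $C(S(\mathfrak{A}))$, the class norm of a difference of truncations equals $\|\bar Z^l_{\infty,t}-\bar Z^k_{\infty,t}\|_\infty\le\|\bar Z^l_{\infty,t}-\bar Z^k_{\infty,t}\|$, so Proposition~\ref{Prop: fundvarvar} alone makes $([\bar Z^k_{N,t}]_N)_k$ Cauchy, and completeness of the $C^*$-algebra $[\mathfrak{A}]_\infty^\pi$ furnishes the limit class; norm continuity of the resulting section is then checked via $j$, and Theorem~\ref{vdv} is applied. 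If you want to salvage your route, you would have to prove the $N$-uniform operator-level majorant from scratch (nothing in the paper supports it); otherwise, replace that step by the quotient-algebra Cauchy argument, which needs only the classical estimates you already have.
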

\begin{proof}
To prove this we first observe that the classical limits $a_\infty$ and $h_\infty$ of the continuous cross section $a_N$ and $h_N$ are polynomials, i.e. the vector space $\mathfrak{D}(a_\infty,h_\infty)\neq\{0\}$. We can therefore find $t>0$  small enough such that $2tM\|a_\infty\|<\delta$, i.e., $ta_\infty\in\mathfrak{U}$. Moreover, the ensuing coefficients $\bar{z}_{\infty,t}(n)$ satisfy the recurrence relation \eqref{expansion}--\eqref{expansion2} with $[\cdot,\cdot]$ replaced by $\{\cdot,\cdot\}$.
Therefore, on account of Proposition~\ref{Prop: fundvarvar} the series $\bar{Z}_{\infty,t}=\sum_{n=1}^\infty\bar{z}_{\infty,t}(n)$ exists in $\mathfrak{D}(a_\infty,h_\infty)$. In particular, the partial sums $ Z_{\infty,t}^k=\sum_{n=1}^k\bar{z}_{\infty,t}(n)$ form a Cauchy sequence, so that
\begin{align*}\label{eqs}
\lim_{k,l\to\infty}\|[\bar{Z}_{N,t}^l]_N- [\bar{Z}_{N,t}^k]_N\|&=\lim_{k,l\to\infty}\lim_{N\to\infty}\|\bar{Z}_{N,t}^l - \bar{Z}_{N,t}^k\|_N\nonumber\\&=\lim_{k,l\to\infty}\|\bar{Z}_{\infty,t}^l- \bar{Z}_{\infty,t}^k\|_\infty\nonumber\\&\leq \lim_{k,l\to\infty}\|\bar{Z}_{\infty,t}^l - \bar{Z}_{\infty,t}^k\|=0,
\end{align*}
where in the second step we have used that the sections $[\bar{Z}_{N,t}^k]_N$ are symmetric sequences and hence continuous, \textit{cf.} Lemma \ref{Lemm: fundamental1}. Hence, $([\bar{Z}_{N,t}^k]_N)_k$ is a Cauchy sequence of elements $[\bar{Z}_{N,t}^k]\in [\mathfrak{A}]_\infty^\pi$. Since $[\mathfrak{A}]_\infty^\pi$ is a $C^*-$algebra, the ensuing limit $[\bar{Z}_{N,t}]_N\in [\mathfrak{A}]_\infty^\pi$.
In particular, the sequence
\begin{align*}
A_{N,t}:=\begin{cases}
    \bar{Z}_{N,t}, \ \ &N\in\mathbb{N};\\
    [\bar{Z}_{N,t}]_N, \ \ &N=\infty;
\end{cases}
\end{align*}
induces a continuous cross-section of the symmetric bundle. Indeed, on account of  condition \eqref{Item: norm continuity for bundle of Cstar algebras} of the definition of a continuous cross-section (see $\S$\ref{subs:intro}),  the norm function is continuous, since
\begin{align*}
\lim_{N\to\infty}\|A_{N,t}\|_N&=\| [\bar{Z}_{N,t}]_N\|=\|j( [\bar{Z}_{N,t}]_N)\|_\infty\\&= \lim_{k\to\infty}\|j([\bar Z_{N,t}^k]_N)\|_\infty=\lim_{k\to\infty}\|\bar{Z}_{\infty,t}^k\|_\infty=\|\bar{Z}_{\infty,t}\|_\infty,
\end{align*}
using that the map $j:[\mathfrak{A}]_\infty^\pi\to C(S(\mathfrak{A}))$ is an isometric $*$-isomorphism, \textit{cf.} \eqref{map j} in the second step, Lemma~\ref{Lemm: fundamental1} in the final last step, and, on account of Proposition~\ref{Prop: fundvarvar} the estimate
\begin{align*}
\lim_{k\to\infty}|\|\bar{Z}_{\infty,t}\|_\infty-\|\bar{Z}_{\infty,t}^k\|_\infty|
\leq \lim_{k\to\infty}\|\bar{Z}_{\infty,t}-\bar{Z}_{\infty,t}^k\|_\infty
\leq \lim_{k\to\infty}\|\bar{Z}_{\infty,t}-\bar{Z}_{\infty,t}^k\|
=0,
\end{align*}
in the last step.
\noindent
Since, $$F_N(t)=\frac{1}{N}\log{Tr[e^{N\bar{Z}_{N,t}}]}-\frac{1}{N}\log{Tr[e^{Nh_N}]},$$
and we now know that $[\bar{Z}_{N,t}]_N\in [\mathfrak{A}]_\infty^\pi$, 
we may apply \cite[Theorem 4.4]{Ven20222}, stating that the  mean-field limit of the free energy for macroscopic observables exists, and is given by
$$\lim_{N\to\infty}F_N(t)=\sup_{\omega\in S(\mathfrak{A})}(\bar{Z}_{\infty,t}(\omega)-s_\infty(\omega))-\sup_{\omega\in S(\mathfrak{A})}(h_\infty(\omega)-s_\infty(\omega)),$$
where $s_\infty(\omega)$ is the mean-field entropy, 
\textit{cf.} \eqref{def entropy}. This concludes the proof of the theorem.
\end{proof}
\begin{remark}
    The previous theorem shows that the limit $F(t)$ is not characterized by substituting $a_N$ and $h_N$ by their classical counterparts, $a_\infty$ and $h_\infty$, but that the interactions between $a_\infty$ and $h_\infty$ also play a role in defining this limit. This indeed confirms that Golden-Thompson inequality, \textit{cf.} \eqref{GT}, remains strict.
    \hfill$\blacksquare$
\end{remark}

Finally, it may be clear that differentiability is in general not guaranteed. However, if one considers for example quantum  averages, i.e. $a_N=t_\mu^{\Lambda}$ and $h_N=t_{\mu'}^{\Lambda}$ where $t_\mu,t_\mu'$ denote two basis elements of the lie algebra of $\mathfrak{s}\mathfrak{u}(\ell)$, \textit{cf.} \eqref{basisliealgebra}, then 
\begin{align*}
    F(t)=\log{Tr[e^{tt_\mu}e^{t_{\mu'}}]}-\log{Tr[e^{t_{\mu'}}]},
\end{align*}
which is differentiable in all real $t$.

\subsection*{Acknowledgements}
The authors thank Teun van Nuland, Nicolo' Drago and Lorenzo Pettinari for their detailed feedback and fruitful discussions. The work of Christiaan J.F. van de Ven has been supported by  the Deutsche Forschungsgemeinschaft (DFG, German Research Foundation)– 470903074; 465199066. Matthias Keller also acknowledges the financial support of the DFG.

\end{document}